\renewcommand{\cref}{\Cref}
\DeclarePairedDelimiter{\ceil}{\lceil}{\rceil}
\DeclarePairedDelimiter{\floor}{\lfloor}{\rfloor}
\DeclarePairedDelimiter{\set}{\{}{\}}
\newtheorem{theorem}{Theorem}[section]
\crefname{theorem}{Theorem}{Theorems}
\newtheorem{lemma}[theorem]{Lemma}
\crefname{lemma}{Lemma}{Lemmas}
\newtheorem{proposition}[theorem]{Proposition}
\crefname{proposition}{Proposition}{Propositions}
\newtheorem{claim}[theorem]{Claim}
\crefname{claim}{Claim}{Claims}
\newtheorem{definition}[theorem]{Definition}
\crefname{definition}{Definition}{Definitions}
\crefname{defn}{Definition}{Definitions}
\newtheorem{remark}[theorem]{Remark}
\newtheorem{rem}[theorem]{Remark}
\newtheorem{observation}[theorem]{Observation}
\crefname{observation}{Observation}{Observations}
\newtheorem{example}[theorem]{Example}
\crefname{example}{Example}{Examples}
\definecolor{mygray}{gray}{0.75}
\newcommand{\remove}[1]{}
\newcommand{\pprotocol}[5]{{\begin{figure}[#4]
\medskip
\begin{center}
\fbox{
\hbox{\quad
\begin{minipage}{0.90\textwidth}
\begin{center}
{\bf #1}
\end{center}
\small
#5
\smallskip
\end{minipage}
\quad
} }
\caption{\label{#3} #2}
\end{center}
\end{figure} } }
\newcommand{\protocol}[4]{\pprotocol{#1}{#2}{#3}{ht!}{#4}}
\DeclareMathOperator{\var}{{\rm Var}}
\DeclareMathOperator{\Lap}{{\rm Lap}}
\DeclareMathOperator{\dist}{{\rm dist}}
\DeclareMathOperator{\perm}{{\rm perm}}
\newcommand{\access}{\mathsf{Access}}
\newcommand{\cost}{\mathsf{cost}}
\newcommand{\calA}{{\cal A}}
\newcommand{\calB}{{\cal B}}
\newcommand{\calF}{{\cal F}}
\newcommand{\calM}{{\cal M}}
\newcommand{\calD}{{\cal D}}
\newcommand{\calP}{{\cal P}}
\newcommand{\calT}{{\cal T}}
\newcommand{\N}{{{\mathbb N}}}
\newcommand{\R}{{{\mathbb R}}}
\newcommand{\vecX}{\mathbf{x}}
\newcommand{\vecY}{\mathbf{y}}
\def\bits{\{0,1\}}
\renewcommand{\epsilon}{\varepsilon}
\newcommand{\myspace}{{\;\;\;}} 
\newcommand{\AlgLocate}{\mbox{\sc Locate}_\calP}
\newcommand{\AlgLocatePrime}{\mbox{\sc Locate}'_\calP}
\newcommand{\AlgTesterPrime}{\mbox{\sc Tester}'_\calT}
\newcommand{\AlgTester}{\mbox{\sc Tester}}
\newcommand{\AlgSearch}{\mbox{\sc Search}}
\newcommand{\AlgMultiSearch}{\mbox{\sc MultiSearch}}
\newcommand{\getsr}{{\:{\leftarrow{\hspace*{-3pt}\raisebox{.75pt}{$\scriptscriptstyle\$$}}}\:}}
\newcommand{\distParam}{\gamma}
\begin{document}

\def\Kobbi{Kobbi Nissim}
\def\KobbiEmail{kobbi.nissim@georgetown.edu}
\def\KobbiDept{Dept.\ of Computer Science, Georgetown University}
\def\KobbiThanks{
\KobbiDept,
Email: {\tt \KobbiEmail}
}

\def\Amos{Amos Beimel}
\def\AmosEmail{amos.beimel@gmail.com}
\def\AmosDept{Dept.\ of Computer Science, Ben-Gurion University}
\def\AmosThanks{
\AmosDept,
Email: {\tt \AmosEmail}
}

\def\Mohammad{Mohammad Zaheri}
\def\MohammadEmail{mz394@georgetown.edu}
\def\MohammadDept{Dept.\ of Computer Science, Georgetown University}
\def\MohammadThanks{
\MohammadDept,
Email: {\tt \MohammadEmail}
}


\title{Exploring Differential Obliviousness\thanks{Work supported by NSF grant No.~1565387 TWC: Large: Collaborative: Computing Over Distributed Sensitive Data.}}

\author{Amos Beimel\thanks{Dept.\ of Computer Science, Ben-Gurion University, Israel. {\tt amos.beimel@gmail.com}. Work done while A.B. was visiting Georgetown University.} \and 
Kobbi Nissim\thanks{Dept.\ of Computer Science, Georgetown University. {\tt kobbi.nissim@georgetown.edu}.} \and 
Mohammad Zaheri\thanks{Dept.\ of Computer Science, Georgetown University. {\tt mz394@georgetown.edu}.}}
 \thispagestyle{empty}
\begin{titlepage}

\maketitle
\begin{abstract} 
In a recent paper, Chan et al.\ [SODA '19] proposed a relaxation of the notion of (full) memory obliviousness, which was introduced by Goldreich and Ostrovsky [J. ACM '96] and extensively researched by cryptographers. The new notion, {\em differential obliviousness}, requires that any two neighboring inputs exhibit similar memory access patterns, where the similarity requirement is that of differential privacy. Chan et al.\  demonstrated that differential obliviousness allows achieving improved efficiency for several algorithmic tasks, including sorting, merging of sorted lists, and range query data structures.

In this work, we continue the exploration of differential obliviousness, focusing on algorithms that do not necessarily examine all their input. This  choice is motivated by the fact that the existence of logarithmic overhead ORAM protocols implies that differential obliviousness can yield at most a logarithmic improvement in efficiency for computations that need to examine all their input. In particular, we explore property testing, where we show that differential obliviousness yields an almost linear improvement in overhead in the dense graph model, and at most quadratic improvement in the bounded degree model. We also explore tasks where a non-oblivious algorithm would need to explore different portions of the input, where the latter would depend on the input itself, and where we show that such a behavior can be maintained under differential obliviousness, but not under full obliviousness. Our examples suggest that there would be benefits in further exploring which class of computational tasks are amenable to differential obliviousness. 
\thispagestyle{empty}
\end{abstract}

\end{titlepage}

\section{Introduction}\label{sec:intro}

A program's memory access pattern can leak significant information about the private information used by the program even if the memory content is encrypted. Such leakage can turn into a data protection problem in various settings. In particular, where data is outsourced to be stored on an external server, it has been shown that access pattern leakage can be exploited in practical attacks and lead to the compromise of the underlying data~\cite{IslamKK14,CashGPR15,NaveedKW15,KellarisKNO16,LachariteMP18}. Such leakages can also be exploited when a program is executed in a secure enclave environment but needs to access memory that is external to the enclave.

Memory access pattern leakage can be avoided by employing a strategy that makes the sequence of memory accesses (computationally or statistically) independent of the content being processed. Beginning with the seminal work of Goldreich and Ostrovsky, it is well known how to transform any program running on a random access memory (RAM) machine to one with an {\em oblivious} memory access pattern while retaining efficiency by using an Oblivious RAM protocol (ORAM)~\cite{G87,O90,GO96}. Current state-of-the-art ORAM protocols achieve logarithmic overhead~\cite{AKLNS18}, matching a recent lowerbound by Larsen and Nielsen~\cite{LN18}, and protocols with $O(1)$ overhead exist when the server is allowed to perform computation and large blocks are retrieved~\cite{DDFRSW16,MMB15}. To further reduce the overhead, oblivious memory access pattern protocols have been devised for specific tasks, including graph algorithms~\cite{BlantonSA13,GoodrichS14}, geometric algorithms~\cite{EppsteinGT10} and sorting~\cite{Goodrich14, LinSX19}. The latter is motivated by sorting being a fundamental and well researched computational task as well as its ubiquity in data processing.

\subsection{Differential Obliviousness}

Full obliviousness is rather a strong requirement: any two possible inputs (of the same size) should exhibit identical or indistinguishable sequences of memory accesses. Achieving full obliviousness via a generic use of ORAM protocols requires a setup phase with running time (at least) linear in the memory size and then a logarithmic overhead per each memory access. 

A recent work by Chan, Chung, Maggs, and Shi~\cite{SO:Shi19} put forward a relaxation of the obliviousness requirement where indistinguishability is replaced with differential privacy. Intuitively, this means that any two possible neighboring inputs should exhibit memory access patters that are similar enough to satisfy differential privacy, but may still be too dissimilar to be         ``cryptographically'' indistinguishable. It is not a priori clear whether differential obliviousness can be achieved without resorting to full obliviousness. However, the recent work Chan et al.\ showed that differential obliviousness does allow achieving improved efficiency for several algorithmic tasks, including sorting (over very small domains), merging of sorted lists, and range query data structures. 

Also of relevance are the works by He et al.~\cite{HMFS17} and Mazloom and Gordon~\cite{MazloomG18}, which study
protocols for secure multiparty computation in which the parties are allowed to learn
information from the computation as long as this information preserves the differential privacy of
the input. He et al. and Mazloom and Gordon demonstrate that this leakage is useful:
He et al. construct protocols for the private record linkage problem for two databases; Mazloom and Gordon present
protocols for  histograms, PageRank, and matrix factorization.

Furthermore, even the use of ORAM protocols may be insufficient for  preventing leakage in cases where the number of memory probes is input dependent. In fact, Kellaris et al.~\cite{KellarisKNO16} show that such leakage can result in a complete reconstruction in the case of retrieving elements specified by range queries, as the number of records returned depends on the contents of the data structure. Full obliviousness would require that the sequence of memory accesses would be padded to a maximal one to avoid such leakage, a solution that would have a dire effect on the efficiency of many algorithms. Differential obliviousness may in some cases allow achieving meaningful privacy while maintaining efficiency. Examples of such protocols include the combination of ORAM with differentially private sanitization by  Kellaris et al.~\cite{KellarisKNO17} and the recent work of Chan et al.~\cite{SO:Shi19} on range query data structures, which avoids using ORAM.

\subsection{This Work: Exploring Differential Obliviousness}

Noting that the existence of logarithmic overhead ORAM protocols implies that differential obliviousness can yield at most a logarithmic improvement in efficiency for computations that need to examine all their input, we explore tasks where this is not the case. In particular, we focus on property testing and on tasks where the number of memory accesses can depend on the input. 

\paragraph{Property testing.} As evidence that differential obliviousness can provide a significant improvement over full obliviousness, we show in \cref{sec:do-test} that property testers in the dense graph model, where the input is in the adjacency matrix representation~\cite{GoldreichGR98}, can be made differentially oblivious. This result captures a large set of testable graph properties~\cite{GoldreichGR98,AlonFKS00} including, e.g., graph bipartitness and having a large clique. Testers in this class probe a uniformly random subgraph and hence are fully oblivious without any modification, as their access pattern does not depend on the input graph. However, this is not the case if the tester reveals its output to the adversary, as this allows learning information about the specific probed subgraph. A fully oblivious tester would need to access a linear-sized subgraph, whereas we show that a differentially oblivious tester only needs to apply the original tester $O(1)$ times.\footnote{We omit dependencies on privacy and accuracy parameters from this introductory description.}

We also consider property testing in the bounded degree model, where the input is in the incidence lists model~\cite{algorithmica:GoldreichR02}. In this model we provide negative results, demonstrating that adaptive testers cannot, generally, be made differentially oblivious without a significant loss in efficiency. In particular, in \cref{sec:do-test-lower} we consider differentially oblivious property testers for connectivity in graphs of degree at most two. For non-oblivious testers, it is known that constant number of probes suffice when the tester is adaptive~\cite{algorithmica:GoldreichR02}.\footnote{In an adaptive tester at least one choice of a node to probe should depend on information gathered from incidence lists of previously probed nodes.} It is also known that any non-adaptive tester for this task requires probing $\Omega(\sqrt{n})$ nodes~\cite{eccc:RaskhodnikovaS06}. We show that this lowerbound extends to differentially oblivious testers, i.e., any differentially oblivious tester for connectivity in graphs of maximal degree $2$ requires $\Omega(\sqrt{n})$ probes. While this still improves over full obliviousness, the gap between full and differential obliviousness is in this case diminished.

\paragraph{Locating an Object Satisfying a Property.} Here, our goal is to check whether a given data set of objects includes an object that satisfies a specified property. Without obliviousness requirements, a natural approach is to probe elements in a random order until an element satisfying the property is found or all elements were probed. If a $p$ fraction of the elements satisfy the property, then the expected number of probes is $1/p$. This algorithm is in fact instance optimal when the data set is randomly permuted.\footnote{Our treatment of instance optimality is rather informal. The concept was originally presented in~\cite{PODS:Mon01}.}  

A fully oblivious algorithm would require $\Omega(n)$ probes on any dataset even when $p=1$. In contrast, we demonstrate in \cref{sec:do-search} that with differential obliviousness instance optimality can, to a large extent, be  preserved. Our differentially oblivious algorithm always returns a correct answer and makes at most $m$ probes with probability at least $1-e^{-O(mp)}$. 

\paragraph{Prefix Sum.}
Our last example considers a sorted dataset (possibly, the result of an earlier phase in the computation). Our goal is to compute the sum of all records in the (sorted) dataset that are less than or equal to a given value $a$ (see \cref{sec:do-sum} for the definition of privacy). 

Without obliviousness requirements, one can find the greatest record less than or equal to value $a$, say, using binary search, and then compute the prefix sum by a quick scan through all records appearing before this record. This algorithm is in fact nearly instance optimal, as it can be shown that any algorithm which returns the correct exact answer with non-negligible probability must probe all entries greater than $a$.
However, fully oblivious algorithms would have to probe the entire dataset. 

In \cref{sec:do-sum}, we give our nearly instance optimal differentially oblivious prefix sum algorithm. As the probes of a binary search would leak information about the memory content, we introduce a differentially oblivious ``simulation'' of the binary search. Our differentially oblivious binary search runs in time $O(\log^2 n)$. 

We also address the scenario where there are multiple prefix sum queries to the same database. If the number of queries is bounded by some integer $t$,
then each differentially oblivious binary search will run in time $O(t \log^2 n)$ (as we need to run the search algorithm with a smaller privacy parameter $\epsilon$).  
Using ORAM, one can answer such queries with $O(n\log n)$ prepossessing time and $O(\log^2 n)$ time per query.
Combining our algorithm and ORAM, we can amortize the pre-processing time over $O(\sqrt{n})$ queries,
that is, without any pre-processing, the running of time of answering the $i$-th query is $O(i \log^4 n)$ for the first $O(\sqrt{n})$ queries and $O(\log^2 n)$ for any further query.  

\subsection{Background Work}

The papers by Chan, Chung, Maggs, and Shi~\cite{SO:Shi19}, He, Machanavajjhala,  Flynn, and Srivastava~\cite{HMFS17}, and by Mazloom and Gordon~\cite{MazloomG18} mentioned above are most relevant for this article. As mentioned above, Kellaris et al.~\cite{KellarisKNO17} examined a similar concept with the goal of preventing reconstruction attacks in secure remote databases.
Goldreich, Goldwasser, and Ron~\cite{GoldreichGR98} initiated the research on graph property testing.
Persiano and Yeo~\cite{PersianoY19} showed that the $O(\log n)$ lowerbound for ORAM of~\cite{LN18} also holds when the security requirement is relaxed to differetial privacy.
Goldreich's book on property testing~\cite{Book:Goldreich17} gives sufficient background for our discussion. Dwork, McSherry, Nissim, and Smith~\cite{DworkMNS06} defined differential privacy.
For more details on ORAM and a list of relevant papers,
the reader can consult~\cite{AKLNS18}.





\section{Definitions}\label{sec:def}

\subsection{Model of Computation}

We consider the standard Random Access Memory (RAM) model of computation that consists of a CPU and a memory. The CPU executes a program and is allowed to perform two types of memory operations: read a value 
from a specified physical address, and write a value to a specified physical address. 
We assume that the CPU has a private cache of where it can store $O(1)$ values (and/or a polylogarithmic number of bits).
As an example, in the setting of a client storing its data on the cloud, the client plays the role of the CPU and the cloud server plays the role of the memory.

We assume that a program's sequence of read and write operations may be visible to an adversary. We will call this sequence the program's access pattern. We will further assume that the memory content is encrypted so that no other information is leaked about the content read from and stored in memory location. 
The program's access pattern may depend on the program's input, and may hence leak information about it.

\subsection{Oblivious Algorithms}

There are various works focused on oblivious algorithms~\cite{EppsteinGT10,CRR:Good12,SP:Shi15} and Oblivious RAM (ORAM) constructions~\cite{GO96}. These works adopt ``full obliviousness''
as a privacy notion. Suppose that 
$M(\lambda, \vecX)$ is an algorithm that takes in two inputs, a security 
parameter $\lambda$ and an input dataset denoted $\vecX$. We denote by 
$\access^M(\lambda,\vecX)$, the ordered sequence of memory accesses the algorithm 
$M$ makes on the input $\lambda$ and $\vecX$. 

\begin{definition}[Fully Oblivious Algorithms]
Let $\delta$ be a function in a security parameter $\lambda$. 
We say that algorithm $M$ is 
$\delta$-statistically oblivious, iff for all inputs $\vecX$ and $\vecY$ of equal length, 
and for all $\lambda$, it holds that $\access^M(\lambda,\vecX) \approx^{\delta(\lambda)} 
\access^M(\lambda,\vecY)$ where $\approx^{\delta(\lambda)}$ denotes that the two 
distributions have at most $\delta(\lambda)$ statistical distance. We say that 
$M$ is perfectly oblivious when $\delta = 0$.
\end{definition}

\subsection{Differentially Oblivious Algorithms}


Suppose that $M(\lambda,\vecX,q)$ is an (stateful) algorithm that takes in three inputs, a security parameter $\lambda>0$, an input dataset denoted by $\vecX$ and a value $q$.
We slightly change the 
definition of differentially oblivious algorithms given in~\cite{SO:Shi19}:


\protocol{Experiment $\mathsf{Exp}^{A,M}_b$}{An experiment for defining differential obliviousness.
\vspace*{-0.5cm}
}{fig:exp}{
\underline
{
\textbf{Experiment} $\mathsf{Exp}^{A,M}_b(\lambda,n)$
}
\begin{enumerate}
\item 
$(\vecX_0,\vecX_1, st) \getsr A_1(\lambda,n)$
\item
$b' \getsr A_2^{\calM(\vecX_b, \cdot)}(st)$
\item 
Return $b'$
\end{enumerate}
\underline
{
\textbf{Oracle} $\calM(\vecX, q)$
}
\begin{enumerate}
\item
$(\mathsf{out},state) \getsr M(\vecX,q,state)$
\item 
Output $ \access^M(\vecX,q,state)$
\end{enumerate}
}


\begin{definition}[Neighbor-respecting]
We say that two input datasets $\vecX$ and $\vecY$ are neighboring iff they are of the same length and differ in exactly one entry. We say that $A=(A_1,A_2)$ is neighbor-respecting adversary iff for
every $\lambda$ and every $n$, $A_1$ outputs neighboring datasets $\vecX_0,\vecX_1$, 
with probability 1. 
\end{definition}

\begin{definition}
Let $\epsilon,\delta$ be privacy parameters. 
Let $M$ be an (possibly stateful) algorithm described as above. 
To an adversary $A$ we associate
the experiment in Figure~\ref{fig:exp}, for every $\lambda \in \N$.
We say that $M$ is
$(\epsilon,\delta)$-adaptively differentially oblivious if for all (computationally unbounded) stateful neighbor-respecting
adversary $A$ we have
\[
\Pr[\mathsf{Exp}^{A,M}_0(\lambda,n)=1] \leq 
e^{\epsilon} \cdot \Pr[\mathsf{Exp}^{A,M}_1(\lambda,n)=1] + \delta.
\]
In Figure~\ref{fig:exp}, $\access^M(\vecX,q,state)$ denotes the ordered sequence of memory accesses the 
algorithm $M$ makes on the inputs $\vecX,q$ and $state$.
\end{definition}

\begin{rem} The notion of adaptivity here is different from the one defined in~\cite{SO:Shi19}. We require that the dataset $\vecX$ remain the same through the experiment whereas in~\cite{SO:Shi19} the adaptive adversary can add or remove entries from the dataset.
\end{rem}

As with differential privacy, we usually think about $\epsilon$ as a small constant and require that $\delta=o(1/n)$ where $n=|\vecX|$~\cite{DworkMNS06}. 
Observe that if $M$ is $\delta$-statistically oblivious then it is also 
$(0,\delta)$-differentially oblivious.

\medskip

The following simple lemma will be useful to analyze our algorithms. The proof of the lemma appears in \cref{appendix:missing_proofs}.

\begin{lemma}
\label{lem:calAcalB}
Let $\calA$ be an $(\epsilon,0)$-differentially oblivious algorithm and $\calB$ be an algorithm such that for every dataset $\vecX$ the statistical distance between $\calA(\vecX)$ and $\calB(\vecX)$ is at most $\gamma$ (that is,
$|\Pr[\calA(\vecX)\in S] -\Pr[\calB(\vecX)\in S]| \leq \gamma $ for every $S$). Then, $\calB$ is an $(\epsilon,(1+e^\epsilon)\gamma)$-differentially oblivious algorithm.
\end{lemma}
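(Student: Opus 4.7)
The plan is to reduce differential obliviousness of $\calB$ to that of $\calA$ via two applications of the triangle inequality on statistical distance, sandwiching the $(\epsilon,0)$-DO guarantee of $\calA$ in between. Concretely, I would fix an arbitrary neighbor-respecting adversary $A=(A_1,A_2)$ and an arbitrary pair of neighboring datasets $\vecX_0,\vecX_1$ produced by $A_1$, and analyze $\Pr[\mathsf{Exp}^{A,\calB}_0=1]$ event by event over the possible access-pattern transcripts. The hypothesis guarantees that, for each $\vecX$, the distribution of access patterns returned by the oracle when backed by $\calB$ is within statistical distance $\gamma$ of the distribution returned when backed by $\calA$.

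The core inequality I would prove is: for any event $S$ in the space of the adversary's view,
\[
\Pr[\calB(\vecX_0)\in S]\;\le\;\Pr[\calA(\vecX_0)\in S]+\gamma\;\le\;e^{\epsilon}\Pr[\calA(\vecX_1)\in S]+\gamma\;\le\;e^{\epsilon}\bigl(\Pr[\calB(\vecX_1)\in S]+\gamma\bigr)+\gamma.
\]
The first step is the triangle inequality using the hypothesis on $\vecX_0$, the second step uses the $(\epsilon,0)$-DO of $\calA$, and the third step is a second application of the triangle inequality on $\vecX_1$. Collecting the additive error terms yields $\Pr[\calB(\vecX_0)\in S]\le e^{\epsilon}\Pr[\calB(\vecX_1)\in S]+(1+e^{\epsilon})\gamma$, which is exactly the $(\epsilon,(1+e^{\epsilon})\gamma)$-DO conclusion once we specialize $S$ to the event ``$b'=1$''.

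The only nontrivial point is making sure the statistical-distance hypothesis applies to the adversary's full view in the experiment, not merely to a single call. This follows from the fact that statistical distance is preserved under post-processing: the adversary $A_2$, together with the adaptive choice of queries $q$, is just a (randomized) function of the oracle responses, so coupling the two oracles and appealing to the hypothesis per-input suffices. I do not expect any serious obstacle here; the main care is simply to apply the triangle inequality \emph{after} passing through the multiplicative $e^{\epsilon}$ factor, so that the second $\gamma$ picks up the $e^{\epsilon}$ multiplier and produces the $(1+e^{\epsilon})\gamma$ term rather than $2\gamma$.
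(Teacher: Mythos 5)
Your chain of inequalities is exactly the paper's proof: apply the $\gamma$-closeness on $\vecX_0$, then the $(\epsilon,0)$-DO of $\calA$, then the $\gamma$-closeness on $\vecX_1$, and collect the error terms to get $(1+e^{\epsilon})\gamma$. The extra remark about post-processing and the adversary's full view is a reasonable point of care but does not change the argument, which matches the paper's.
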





\remove{
\subsection{Instance Optimal}

We now define instance optimality for algorithms given in \cite{PODS:Mon01}. Suppose that $A$ is an algorithm and $D$ is an input to the algorithm $A$. We denote by $\cost(A,D)$ the number of memory accesses the algorithm $A$ makes upon receiving the input $D$. We define instance optimality as follows:

\begin{definition}[Instance Optimality]
Let $\calA$ be a class of algorithms and let $\calD$ be a class of inputs to the algorithms. We say an algorithm $B$ is instance optimal over $\calA$ and $\calD$ if $B \in \calA$ and for every $A \in \calA$ and every $D \in \calD$, we have  
$
\cost(B,D) = O(\cost(A,D))
$.
\end{definition}

Note that instance optimality refers to optimality in every instance, rather than just the worst case or the average case. 
We use the word ``optimal'' to show the fact that $B$ is essentially the best algorithm in $\calA$.
}

\remove{
\subsection{Chebychev's Inequality}

We say that two random variables $X$ and $Y$ are negatively correlated if $E(XY) \leq E(X) E(Y)$.
\begin{proposition}
Let $X_1,\dots,X_m$ be a sequence of random variables such that $X_i$ and $X_j$ are negatively correlated for every $i\neq j$ and $X=\sum_{i=1}^m X_i$. Then,
$$\Pr\left[\,\Big|X-E[X]\Big| \geq t \,\right] \leq \frac{\sum_{i=1}^m \var[X_i]}{t^2}.$$
\end{proposition}
\begin{proof}
We first calculate $\var(X)$
\begin{align*}
\var[X] 
& =E\left[(\sum_{i=1}^m X_i)^2\right]-E\left[\sum_{i=1}^m X_i\right]^2\\
&=
E\left[\sum_{i\neq j}X_i X_j\right]
+E\left[\sum_{i=1}^m X_i^2\right]-E\left[\sum_{i=1}^m X_i\right] E\left[\sum_{j=1}^m X_j\right] \\
&=
\sum_{i\neq j}E\left[X_i X_j\right]
+\sum_{i=1}^m E\left[ X_i^2\right]-
\sum_{i\neq j} E\left[ X_i\right] E\left[ X_j\right] 
-
\sum_{i=1}^m E\left[ X_i\right]^2 \\
& \leq 
\sum_{i=1}^m \var[X_i], 
\end{align*}
where the last inequality is implied by the fact that the variables $X_i$ and $X_j$ are negatively correlated.
Thus, by Chebychev's Inequality,
$$\Pr\left[\,\Big|X-E[X]\Big| \geq t \,\right] \leq \frac{\var[X]}{t^2}
\leq \frac{\sum_{i=1}^m \var[X_i]}{t^2}.$$
\end{proof}
}

\section{Differentially Oblivious Property Testing of Dense Graphs Properties}\label{sec:do-test}

In this section, we present a differentially oblivious property tester for dense graphs properties in the adjacency matrix representation model. A property tester is an algorithm that decides whether a given object has a predetermined property or is
far from any object having this property by examining a small random sample of its input. 
The correctness requirement of property testers ignores objects that neither have the property nor are far from having the property. 
However, the privacy requirement is ``worst case'' and should hold for any two neighboring graphs.
For the definition of privacy we say that two graphs $G,G'$ of size $n$ are neighbors if one can get $G'$ by changing the neighbors of exactly one node of $G$.

Property testing of graph properties in the adjacency matrix representation was introduced in~\cite{GoldreichGR98}. A graph $G=(V,E)$ is represented by the predicate $f_G: V \times V \rightarrow \bits$ such that $f_G(u,v) = 1$ if and only if $u$ and $v$ are adjacent in $G$. 
The distance between graphs is defined to be the number of different matrix entries over $|V|^2$.
This model is most suitable for dense graphs where the number of edges is  $O(|V|^2)$. We define a property $\calP$ of graphs to be a subset of the graphs. We write $G \in \calP$ to show that graph $G$ has the property $\calP$. 
For example, we can define the bipartiteness property, where $\calP$ is the set of all bipartite graphs.%
\footnote{
Recall that an undirected graph is bipartite (or 2-colorable) if its vertices can be partitioned
into two parts, $V_1$ and $V_2$, such that each part is an independent set (i.e., $E \subseteq \set{(u, v) :
(u, v) \in V_1 \times V_2}$).
}
We say that an $n$-vertex $G$ is $\distParam$-far from $\calP$ if for every $n$-vertex graph $G' = (V', E') \in \calP$ it holds that the symmetric difference between $E$ and $E'$ is greater than $\distParam n^2$. 
We define the property testing in this model as follows:

\begin{definition}[\cite{GoldreichGR98}]
\label{def:tester}
A $(\beta,\distParam)$-tester for a graph property $\calP$ is a probabilistic algorithm that, on inputs $n,\beta,\distParam$, and an adjacency matrix of an $n$-vertex
graph $G = (V, E)$:
\begin{enumerate}

\item Outputs 1 with probability at least $\beta$, if $G \in \calP$.

\item Outputs 0 with probability at least $\beta$, if $G$ is $\distParam$-far from $\calP$.

\end{enumerate}
\end{definition}

We say a tester has one-sided error, if it accepts every graph in $\calP$ with probability 1.
We say a tester is non-adaptive if it determines all its queries to adjacency matrix only based on $n,\beta, \distParam$, and its randomness; otherwise,
we say it is adaptive.

\begin{example}[\cite{GoldreichGR98}]
Consider the following $(2/3,\distParam)$-tester for bipartiteness: Choose a random subset $A\subset V$ of size $\tilde{O}(1/\distParam^2)$ with uniform distribution and output 1 iff
the graph induced by $A$ is bipartite.
Clearly, if $G$ is bipartite, then the tester will always return 1.  Goldreich et al.~\cite{GoldreichGR98} proved that
if $G$ is $\distParam$-far from a bipartite graph, then the probability that the algorithm returns 1 is at most $1/3$.
\end{example}

Recall that in the graph property testing, the tester $\calT$ chooses a random subset of the graph with uniform distribution to test the property $\calP$. 
Given the access pattern of the tester $\calT$, an adversary will learn nothing since it is uniformly random. Thus, the access pattern by itself does not reveal any information about the input graph. 
However, we assume that the adversary also learns the tester's output and can hence learn some information about the input graph based on the output of the tester. 
To protect this information, we run tester $\calT$ for constant number of times and output $1$ iff the number of times $\calT$ outputs $1$ exceed a (randomly chosen) threshold.

Let $\calT$ be a $(\beta,\distParam)$-tester for a graph property 
$\calP$ where $\beta\leq 1/4$.
We write $c_{\beta,\gamma}$ for the number of nodes that $\calT$ samples. Note that $c_{\beta,\gamma}$ is constant in the graph size and a function of $\beta$ and $\gamma$.
For simplicity, we only consider property testers with one-sided error. 
In \cref{fig:tester}, we describe a $(\beta,\distParam)$-tester $\mathsf{Tester}_{\calT}$ that outputs $1$ with probability at least $\beta$ if $G\in \calP$ and outputs 0 with probability at least $\beta$, if $G$ is $\distParam'$-far from $\calP$, where $\distParam'$ is defined below.

\protocol{Algorithm $\mathsf{Tester}_{\calT}$}{A Differentially Oblivious Property Tester for Dense Graphs.\vspace*{-0.5cm}}{fig:tester}{
\begin{enumerate}
\item[]
Input: graph $G= (V,E)$ 
\item
Let $c \gets 0$ and $T \gets  \frac{\ln (1/2\delta)}{\epsilon}$
\item
For $i=1$ to $4T$ do
\begin{enumerate}
\item
\label{step:tester}
If $\calT(G)=1$ then $c \gets c+1$
\item \label{step:subset}
Let $A$ be the subset of vertices chosen by tester $\calT$ 
\item
\label{step:induce}
Update graph $G$ to be the induced sub-graph on $V$\textbackslash$A$
\end{enumerate}
\item
$\hat{T} \gets 3T + \mathrm{Lap}(\frac{1}{\epsilon})$ 
\item
\label{step:check-c-tester}
If $c \geq \min(\hat{T}, 4T)$ then output $1$, else output 0
\end{enumerate}
}

\begin{theorem}
\label{T:tester}
Let $\epsilon, \delta >0$ and $\gamma' = \gamma - \frac{4\ln (1/2\delta) c_{\beta,\gamma}}{n \epsilon}$.
Algorithm $\mbox{Tester}_\calT$ is an $(\epsilon, \delta(1+e^\epsilon))$-differentially oblivious algorithm that outputs 1 with probability 1 if $G \in \calP$, and output 0 with probability at least $1-\delta -(2\delta)^{\frac{1}{3\epsilon}}$ if $G$ is $\gamma'$-far from $\calP$.
\end{theorem}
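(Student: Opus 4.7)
The proof naturally splits into three pieces: the $(\epsilon,(1+e^\epsilon)\delta)$-differential-obliviousness guarantee, completeness when $G \in \calP$, and soundness when $G$ is $\gamma'$-far. For the obliviousness half, I plan to reduce to the standard above-threshold (Laplace) mechanism. Observe first that the samples $A_1, \ldots, A_{4T}$ drawn by successive invocations of $\calT$ are each uniform over the current vertex set, so their joint distribution is independent of $G$; hence the memory-access pattern of Steps 2--3 has a $G$-independent distribution, and all leakage about $G$ is funneled through the final comparison bit in Step~5. I then consider an ``ideal'' variant of the algorithm that omits the $\min$ with $4T$ and outputs $1$ iff $c \geq \hat{T}$. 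For two neighboring graphs differing at a single vertex $v$, the count $c$ has sensitivity $1$: the indicator $\calT(G_i)=1$ can disagree on the two graphs only in the unique iteration in which $v$ first appears in $A_i$ (in earlier iterations $v \notin A_i$, so $\calT$ never queries an adjacency-matrix entry incident to $v$; in later iterations $v$ has been deleted and the two induced subgraphs coincide). With sensitivity $1$ and Laplace noise of scale $1/\epsilon$ added to the threshold $3T$, the usual above-threshold argument shows the ideal variant is $(\epsilon,0)$-differentially oblivious. The actual algorithm differs from the ideal one only on the event $\hat{T} > 4T$, of probability $\Pr[\mathrm{Lap}(1/\epsilon) > T] = \tfrac{1}{2}e^{-\epsilon T} = \delta$ by the choice of $T$; the two output distributions are therefore at statistical distance at most $\delta$, and Lemma~\ref{lem:calAcalB} promotes the guarantee to $(\epsilon,(1+e^\epsilon)\delta)$-differential obliviousness.

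Completeness is immediate for the hereditary graph properties captured by this framework (e.g., bipartiteness): every $G_i$ remains in $\calP$, so the one-sided $\calT$ returns $1$ at each iteration, $c = 4T$, and the final check $c \geq \min(\hat{T},4T)$ passes deterministically.

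For soundness, my first step is a distance-preservation lemma: if $G$ is $\gamma'$-far from $\calP$ then every $G_i$ encountered in the loop remains $\gamma$-far, with $\gamma$ and $\gamma'$ related by a shift of order $k/n$ where $k \leq 4Tc_{\beta,\gamma}$ is the total number of removed vertices, matching the statement's parameter choice. The proof extends a hypothetical nearby member of $\calP$ on the reduced vertex set to an $n$-vertex member by adjoining the deleted vertices as isolated, controlling the extra $O(kn)$ edge disagreements. Given that lemma, $\Pr[\calT(G_i)=1 \mid \text{history}] \leq 1-\beta$ at every iteration, and so $c = \sum_i X_i$ is stochastically dominated by $\mathrm{Binomial}(4T, 1-\beta)$, whose mean is at most $T$ under the appropriate assumption on the tester's success rate. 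A multiplicative Chernoff bound then gives
\[
\Pr[c \geq 2T] \;\leq\; e^{-T/3} \;=\; (2\delta)^{1/(3\epsilon)}.
\]
Combined with the Laplace tail $\Pr[\hat{T} \leq 2T] = \Pr[\mathrm{Lap}(1/\epsilon) \leq -T] = \delta$ through the union bound $\Pr[c \geq \hat{T}] \leq \Pr[c \geq 2T] + \Pr[\hat{T} \leq 2T]$, this yields the desired $\Pr[\text{output}=1] \leq \delta + (2\delta)^{1/(3\epsilon)}$ in the $\gamma'$-far case.

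The main technical obstacle I anticipate is the dependence between iterations: since $G_{i+1}$ is a deterministic function of $G$ and the previously sampled $A_1,\ldots,A_i$, the $X_i$ form a dependent sequence, and the stochastic-domination step required to invoke Chernoff must be justified by a conditioning (or Doob-martingale) argument that controls the iteration-by-iteration conditional success probabilities. A secondary subtlety is pinning down the distance-preservation lemma, which needs a mild closure property of $\calP$ (e.g., closure under adjoining isolated vertices, which holds for bipartiteness) so that the extension from an $(n-k)$-vertex member of $\calP$ back to an $n$-vertex member contributes at most $O(kn)$ to the symmetric difference with $G$, producing the exact relation between $\gamma$ and $\gamma'$ stated in the theorem.
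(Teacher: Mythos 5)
Your proposal is correct and follows essentially the same structure as the paper's own proof: the privacy half introduces an idealized variant without the $\min(\cdot,4T)$ clamp, bounds the sensitivity of the count $c$ by~$1$ for a fixed realization of the sampled subsets, and then bridges the $\delta$-distance between the two algorithms via \cref{lem:calAcalB}; the soundness half uses the same decomposition ``$c\geq 2T$ or $\Lap(1/\epsilon)\leq -T$,'' a Chernoff bound on the count, and a Laplace tail estimate. In fact, you are \emph{more} careful than the paper in two places. First, the paper baldly asserts that the per-iteration tester outputs $Z_i$ ``are independent'' when justifying Chernoff, but they are not: $G_{i+1}$ is a deterministic function of $G$ and $A_1,\dots,A_i$, so the $Z_i$ are a dependent sequence. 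Your observation that the correct justification is stochastic domination via a conditional-expectation bound $\Pr[Z_i=1\mid A_1,\dots,A_{i-1}]\leq 1-\beta$ (using that each $G_i$ is still $\gamma$-far) followed by a Chernoff-type bound for such dominated sums (or Azuma) is the right fix. Second, the paper's ``$c=4T$ whenever $G\in\calP$'' and its distance-preservation step both silently assume a closure property of $\calP$ under vertex deletion/addition; your explicit flagging of the required hereditariness and of the isolated-vertex extension in the $\gamma'$-far direction is a legitimate clarification, not a deviation. One small point you did not raise: the paper's hypothesis ``$\beta\leq 1/4$'' is inconsistent with the claim $\mathbb{E}[Z]\leq T$ used in the soundness argument, which needs $\Pr[\calT(G_i)=1]\leq 1/4$ and hence $\beta\geq 3/4$; your hedge ``under the appropriate assumption on the tester's success rate'' is where this should be pinned down.
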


The proof of \cref{T:tester}  appears in \cref{app:section3}.

\section{Lower Bounds on Testing Connectivity 
in the Incidence Lists Model}
\label{sec:do-test-lower}

We now consider differentially oblivious testing of connectivity in the incidence lists model~\cite{algorithmica:GoldreichR02}. In this model a graph has a bounded degree $d$ and is represented as a function $f:V\times[d]\rightarrow V \cup \set{0}$, where $f(v,i)$ is the $i$-th neighbor of $v$ 
(if no such neighbor exists, then $f(v,i)=0$). In this model, the relative distance between graphs is normalized by $dn$ -- the maximal number of edges in the graph. Formally, for two graphs with $n$ vertices,
$$\dist_d(G_1,G_2)\triangleq\frac{|\set{(v,i):v\in V,i\in [d],f_{G_1}(v,i)\neq f_{G_2}(v,i) }|}{dn}.$$ 
A $(\beta,\distParam)$-tester in the incidence lists model is defined as in \cref{def:tester}, where a property $\calP$ is a set of graphs whose maximal degree is $d$ and the distance to a property is defined with respect to  $\dist_d$. 

\remove{
\begin{definition}[\cite{eccc:RaskhodnikovaS06}]
A graph property $\calP$ is $\distParam$-{\em non-trivial} if it does not depend on the degree distribution of the nodes; namely, for every $n$ there is some sequence $d_1,\dots,d_n\in\set{0,\dots,d}$ such that there is at least one graph $G_1$ with degrees $d_1,\dots,d_n$ that satisfies the property $\calP$ and at least one graph $G_2$ with the same degrees that is $\distParam$-far from $\calP$.  \end{definition}
}

Goldreich and Ron~\cite{algorithmica:GoldreichR02} showed how to test if a graph is connected in the incidence list model in time $\tilde{O}(1/\distParam)$.
Raskhodnikova and Smith~\cite{eccc:RaskhodnikovaS06} showed that a tester for connectivity (or any non-trivial property) with run-time $o(\sqrt{n})$ has to be adaptive, that is, the nodes that the algorithm probes should depend on the neighbors of nodes the algorithm has already probed (e.g., the algorithm probes some node $u$, discovers that $v$ is a neighbor and $u$, and probes $v$). We strengthen their results by showing that any tester for connectivity 
in graphs of maximal degree $2$ and run-time $o(\sqrt{n})$ cannot be a differentially oblivious algorithm.   
We stress that adaptivity alone is not a reason for inefficiency with differential obliviousness. In fact, there exist differentially oblivious algorithms that are adaptive (e.g., our algorithm in \cref{sec:do-sum}).

\begin{theorem}
\label{thm:lb-connectivity}
Let $\epsilon,\delta>0$ such that $e^{4\epsilon}\delta < 1/16n$. Every $(\epsilon,\delta)$-differentially private  $(3/4,1/3)$-tester for connectivity in graphs with maximal degree 2 runs in time $\Omega(\sqrt{n}/e^{2\epsilon})$. 
\end{theorem}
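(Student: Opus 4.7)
The plan is to reduce to the non-adaptive lower bound of Raskhodnikova and Smith~\cite{eccc:RaskhodnikovaS06}: show that any $(\epsilon,\delta)$-differentially oblivious adaptive tester can be ``de-adaptivized'' at the cost of a factor $e^{2\epsilon}$ in probe complexity, and invoke the $\Omega(\sqrt n)$ bound that holds for non-adaptive testers of connectivity in bounded-degree graphs.

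I would first set up the standard hard pair of distributions over degree-$2$ graphs. Let $\calD_Y$ be a uniformly labeled Hamiltonian cycle on $n$ vertices (connected), and let $\calD_N$ be a uniformly labeled disjoint union of $n/3$ triangles (which is $1/3$-far from every connected graph of degree at most $2$). A correct $(3/4,1/3)$-tester outputs $1$ on $\calD_Y$ with probability $\geq 3/4$ and $0$ on $\calD_N$ with probability $\geq 3/4$, so its output distributions differ by at least $1/2$ in total variation. The Raskhodnikova--Smith argument then says that any non-adaptive tester with $o(\sqrt{n})$ probes cannot achieve this: by a birthday argument, a probe set chosen independently of the random labeling is, with probability $1-o(1)$, an induced matching-free vertex set in \emph{both} graphs, so its induced view is the empty graph under either distribution.

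The heart of the plan is the ``de-adaptivization'' lemma for $(\epsilon,\delta)$-DO testers. I would argue that for any such $\calT$ using $T$ probes, there is a non-adaptive distribution $\mathcal Q$ over $T$-tuples of vertices such that for every fixed input $G$ the realized probe tuple is $(2\epsilon, \delta')$-close to $\mathcal Q$, for some $\delta'=O(T\delta)$. The intuition: each adaptive decision consumes at most one new piece of input-dependent information -- the (at most two) neighbors of the most recently probed vertex. By considering a re-randomized version of that vertex's incidence list, the decision becomes input-independent, and the re-randomization touches at most two vertex-lists (the vertex itself and one endpoint that moves). Group privacy then costs a factor of at most $e^{2\epsilon}$ per step; using~\cref{lem:calAcalB} and the hypothesis $e^{4\epsilon}\delta<1/(16n)$ to sum the statistical slack over $T$ steps keeps the aggregate error $o(1)$.

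Combining the pieces: if $T\cdot e^{2\epsilon}=o(\sqrt{n})$, then the marginal distribution of the probe tuple of $\calT$ on an input drawn from $\calD_Y$ is $o(1)$-close to its distribution on an input drawn from $\calD_N$, hence the output distributions are also $o(1)$-close -- contradicting the gap of at least $1/2$ established above. Therefore $T=\Omega(\sqrt n/e^{2\epsilon})$, as claimed.

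The main obstacle is making the de-adaptivization step rigorous with only an $e^{2\epsilon}$ loss. The adaptive tester can, in principle, traverse edges by probing a just-revealed neighbor; the crux is that doing so deterministically would violate $(\epsilon,\delta)$-DO, because a single-vertex change to the current vertex's list would completely redirect the next probe. I expect the technical work to consist of bounding, conditionally on the transcript so far, the total-variation shift in the next probe's distribution when one re-randomizes the pertinent incidence entries, and then telescoping these local bounds using the advanced composition encoded in \cref{lem:calAcalB} to match the target $e^{2\epsilon}/\delta'$ parameters.
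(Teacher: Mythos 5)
Your de-adaptivization lemma is the crux, and as stated it does not hold. Differential obliviousness is a guarantee about the \emph{joint} distribution of the full access pattern; it says nothing useful about the conditional distribution of ``the next probe given the transcript so far,'' which can be completely deterministic and arbitrarily input-sensitive for an adaptive tester. Your plan to re-randomize the last-probed vertex's incidence list at each step and charge $e^{2\epsilon}$ per step would compose \emph{multiplicatively} over the $T$ steps, giving a factor $e^{2\epsilon T}$, not $e^{2\epsilon}$; and \cref{lem:calAcalB} cannot rescue this, since it only hybridizes an $(\epsilon,0)$-DO algorithm with a statistically $\gamma$-close one and has nothing to say about telescoping per-step privacy losses. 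So you are left without a valid non-adaptive surrogate $\calQ$, and the reduction to the Raskhodnikova--Smith non-adaptive lower bound does not go through. There is also a secondary issue you gloss over: even if the probe-tuple distributions on $\calD_Y$ and $\calD_N$ were close, the tester still sees the \emph{values} at the probed vertices, so one must additionally argue that the observed incidence lists look the same in both cases; this is the birthday-style ``no short pair'' event, and it needs to be tied to the probe distribution of the actual adaptive tester, not of $\calQ$.

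The paper takes a different route that avoids de-adaptivization altogether. It fixes $H_1$ to be a uniformly permuted $n$-cycle, and obtains $H_2$ from $H_1$ by a local $2$-edge swap at two random positions of distance at least $3$ on the cycle; $H_2$ is again a uniformly permuted $n$-cycle, and $H_1,H_2$ differ on exactly $4$ nodes, so group privacy gives $\Pr[\calA(\access(H_1))=1]\le e^{4\epsilon}\Pr[\calA(\access(H_2))=1]+4e^{4\epsilon}\delta$ for any distinguisher $\calA$. The distinguisher checks whether the first ``close pair'' probed is $(u_i,u_{i+1})$ or $(u_i,u_{i+2})$. The correctness argument (\cref{cl:1/2}) forces the tester, on a random cycle, to hit \emph{some} close pair with probability at least $1/2$, so on $H_1$ this event has probability $\Omega(1/n)$; on $H_2$ the pair $(u_i,u_{i+1})$ is at distance $\geq 3$, so by a union bound over the $\binom{q}{2}$ probe pairs the event has probability $O(q^2/n^2)$. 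Balancing via the group-privacy inequality and the assumption $e^{4\epsilon}\delta<1/(16n)$ yields $q=\Omega(\sqrt{n}/e^{2\epsilon})$. This contradiction is obtained entirely from one neighboring pair and correctness of the tester, with no attempt to simulate the adaptive tester non-adaptively.
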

\begin{proof}
Let $\AlgTester$ be a $(3/4,1/3)$-tester for connectivity in graphs of degree at most $2$. 
We somewhat relax the definition of probes and assume that once the tester probes a node, it sees all edges adjacent to this node. 
We prove that if $\AlgTester$ probes less than $c\sqrt{n}/e^{2\epsilon}$ nodes (for some constant $c$), then it is not $(\epsilon,\delta)$-oblivious.   

Assume that $n \equiv 0 \pmod 3 $. Let $G_1=(V,E_1)$ be a cycle of length $n$ and $G_2=(V,E_1)$ consist of $n/3$ disjoint triangles. Clearly, $G_1$ is connected and $G_2$ is $1/3$-far from a connected graph. 
For a permutation $\pi:V\rightarrow V$, define
$\pi(G_i)=(V,\pi(E_i))$, where 
$\pi(E_i)=\set{(\pi(u),\pi(v)):(u,v) \in E_i}$, and let $\perm(G_i)$
be a random graph isomorphic to $G_i$, that is, $\perm(G_i)=\pi(G_i)$ for a permutation $\pi$ chosen with uniform distribution.%
\footnote{
When we permute a graph, we also permute its incident list representation,
i.e., if $(u,v)\in \pi(E)$, then with probability half $v$ will be the first neighbor of $u$ and with probability half it will be the second. 
}
On the random graph $\perm(G_)$  $\AlgTester$ has to say ``yes'' with probability at least 3/4 
and on the random graph $\perm(G_2)$  $\AlgTester$ has to say ``no'' with probability at least 3/4. 
\begin{observation}
\label{obs:SeeSame}
If $\AlgTester$ does not probe two distinct nodes whose distance is at most two, then $\AlgTester$ sees a collection of paths of length two and cannot know if the graph is $\perm(G_1)$ or $\perm(G_2)$. 
\end{observation}

\begin{claim}
\label{cl:1/2}
Given the random graph $\perm(G_1)$, the tester has to probe two distinct nodes whose distance is at most 2 with probability at least $1/2$. 
\end{claim}
\begin{proof}
Consider $\AlgTester$'s answer when it sees a collection of paths of length $2$.
Assume first that the tester returns ``No'' with probability at least half in this case and let $p$ be the probability that $\AlgTester$ probes two distinct nodes whose distance is at most two on the random graph $\perm(G_1)$. The probability that $\AlgTester$ returns ``Yes'' on $\perm(G_1)$ is at most $p+0.5(1-p)=0.5+0.5p$. Thus,
$0.5+0.5p\geq 3/4$, i.e., $p\geq 0.5$. 

If the tester returns ``Yes'' with probability at least half, then, by symmetric arguments,  with probability at least $1/2$   $\AlgTester$ has to probe two nodes whose distance is at most two on the random graph $\perm(G_2)$. For a permutation $\pi$, if the distance between two nodes in $\pi(G_2)$ is at most 2,
then the distance between these two nodes in $\pi(G_1)$ is at most 2. Thus, by \cref{obs:SeeSame}, 
\begin{multline*}
\Pr[ \AlgTester \mbox{ probes 2 nodes whose distance is 1 or 2 on  } \perm(G_1)] \\
\geq \Pr[\AlgTester \mbox{ probes 2 nodes whose distance is 1 or 2 } \perm(G_2)] \geq 1/2.
\end{multline*}
\end{proof}

Denote the nodes of $G_1$ by $V=\set{v_0,\dots,v_{n-1}}$ and define a distribution on pairs of graphs $H_1,H_2$,
obtained by the following process:
\begin{itemize}
\item
Choose a permutation $\pi:V\rightarrow V$ with uniform distribution and let $H_1=\pi(G_1)$.
\item
Denote $H_1=(V,E_1)$ and $u_j=\pi(v_j)$ for $j\in[n]$.
\item 
Choose with uniform distribution two indices $i,j$ such that 
$j\in\set{i+4,i+3,\dots,i-3}$ (where the addition is done modulo $n$).
\item
Let $H_2=(V,E_2)$,
where $E_2=E_1 \setminus \set{(u_i,u_{i+1}),(u_j,u_{j+1})} \cup \set{(u_i,u_j),(u_{i+1},u_{j+1})}.$ 
\end{itemize}
The graphs are described in \cref{fig:graphs}.
Note that $H_2$ is also a a random graph isomorphic to $G_1$, thus, given $H_2$ one cannot know which pair of non-adjacent nodes $u_i,u_j$ was used to create $H_2$.
\begin{figure}[htp]
    \centering
    \includegraphics[height=5.5cm]{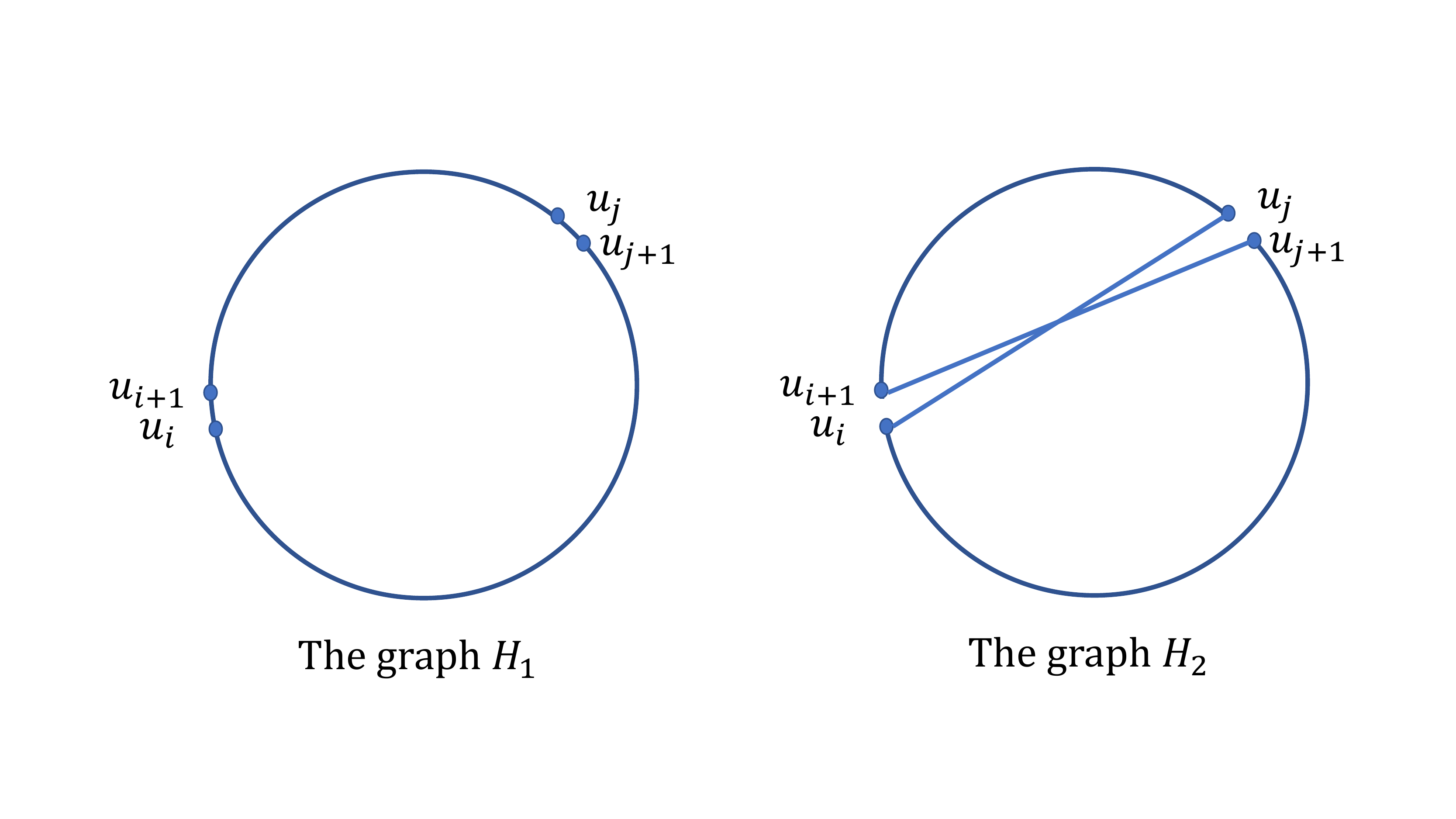}
    \caption{The graphs $H_1$ and $H_2$.\vspace*{-0.2cm}}
    \label{fig:graphs}
\end{figure}

Observe that $H_1$ and $H_2$ differ on $4$ nodes.
Since $\AlgTester$ is $(\epsilon,\delta)$-differentially oblivious,
for every algorithm $\calA$,
\begin{align}
\label{eq:DP}
\Pr[\calA(H_1,H_2,\access^{\AlgTester}(H_1))=1] \leq 
e^{4\epsilon} \cdot \Pr[\calA(H_1,H_2,\access^{\AlgTester}(H_2))=1] + 4e^{4\epsilon}\delta.
\end{align}
Consider the following algorithm $\calA$: 
\begin{center}
If $u_i$ and at least
one of $u_{i+1},u_{i+2}$ is probed by $\AlgTester(H)$ prior to seeing any other pair of nodes of distance at most $2$ in $H_1$ or $H_2$, 
then return $1$ otherwise return $0$.
\end{center}

\begin{claim}
\label{cl:avoid}
Let $i \in \set{1,2}$.
Suppose that $\AlgTester$ probes at most $q$ nodes. 
Pick at random with uniform distribution two nodes in $V$ with distance at least $3$ in $H_{i}$.
The probability that $\AlgTester(H_i)$ probes both $u$ and $v$ prior to seeing any two nodes of distance at most $2$ in $H_{i}$ is $O(q^2/n^2)$ (where the probability is over the random choice of $u,v$ and the randomness of $\AlgTester$).
\end{claim}

\begin{proof}
The node $u$ is a uniformly distributed node in $H_i$ and $v$ is any node of distance at least $3$ from $v$, thus there are $n(n-5)/2$ options for $\set{u,v}$.
Given a collection of paths of length at most $2$ in $H_i$ all options are equally likely. 

Let $w_1,\dots,w_k$ be the nodes probed in some execution of $\AlgTester$. Fix some pair of indices $k_1<k_2$. The probability that $\set{u_i,u_{i+1}}=\set{w_{k_1},w_{k+2}}$ is at most $1/n(n-5)$.
Thus, the probability that $u$ and $v$ are probed is at most $\frac{\binom{q}{2}}{n(n-5)/2}=O(q^2/n^2).$
\end{proof}

\begin{claim}
\label{cl:H1}
Assume that $\AlgTester$ probes at most $q$ nodes.
The probability that $\calA(H_1)=1$ is at least $1/2n-O(q^2/n^2)$.
\end{claim}
\begin{proof}
By \cref{cl:1/2}, the probability that $\AlgTester$ probes at least one pair of nodes with distance at most $2$ is at least $1/2$. 
Given that this event occurs, the probability  that the random  $u_i$ (chosen with uniform distribution) has the smallest
index in the first such pair in $H_1$ (i.e., the first pair is either $(u_i,u_{i+1})$ or $(u_i,u_{i+2})$) is at least $1/n$. 

Clearly,  given these events no two nodes with distance at most $2$ in $H_1$ were probed prior to probing the pair containing $u_i$. 
Furthermore, there are $O(1)$ pairs of nodes that are of distance at most $2$ in $H_2$ and are of distance greater than $2$ in $H_1$.  By \cref{cl:avoid}, the probability that such pair is probed prior to  $\AlgTester$ probing a pair of distance at most $2$ in $H_1$ is
$O(q^2/n^2)$.
\end{proof}

\begin{claim}
\label{cl:H2}
Suppose that $\AlgTester$ probes at most $q$ nodes. The probability that $\calA(H_2)=1$ is  $O(q^2/n^2)$.
\end{claim}

\begin{proof}
The node $u_i$ is a uniformly distributed node in $H_2$. Furthermore, the nodes $u_{i+1}$ is a uniformly distributed node of distance at least $3$ from $u_i$
in $H_2$, thus by \cref{cl:avoid}, the probability that $\AlgTester$
probes both $u_i$ and $u_{i+1}$ prior to seeing any pair of distance at least $2$ in $H_2$ is $O(q^2/n^2)$. This probability can only decrease if we require that  $\AlgTester$
probes both $u_i$ and $u_{i+1}$ prior to seeing any pair of distance at least $2$ in $H_1$ and in $H_2$.

By the same arguments,  the probability that  $\AlgTester$
probes both $u_i$ and $u_{i+2}$ prior to seeing any pair of distance at least $2$ in $H_1$ and in $H_2$ is $O(q^2/n^2)$. 
\end{proof}

To conclude the proof of \cref{thm:lb-connectivity}, we note that by~(\ref{eq:DP}) and 
\cref{cl:H1,cl:H2} 
\[
\frac{1}{2n} -O(q^2/n) \leq \Pr[\calA(H_1)=1] \leq 
 e^{4\epsilon}  \Pr[\calA(H_2)=1] + e^{4\epsilon} \delta 
\leq 
e^{4\epsilon} O(q^2/n^2) + e^{4\epsilon} \delta.
\]
Since $e^{4\epsilon} \delta \leq 1/4n$, it follows that
$q=\Omega(\sqrt{n}/e^{2\epsilon})$.
\end{proof}

\section{Differentially Oblivious Algorithm for Locating an Object}\label{sec:do-search}

Given a dataset of objects $\vecX$ our goal is to locate an object that satisfies a property $\calP$, if one exists. E.g., given a dataset consisting of employee records, find an employee with income in the range $\$35,000$--$\$70,000$ if such an employee exists in the dataset. 

Absent privacy requirements, a simple approach is to probe elements of the dataset in a random order until an element satisfying the property is found or all elements were probed.  If a $p$ fraction of the dataset entries satisfy $\calP$ then the expected  number of elements sampled by the non-private algorithm is $1/p$. However, a perfectly oblivious algorithm would require $\Omega(n)$ probes on any dataset, in particular on a dataset where all elements satisfy $\calP$, where non-privately one probe would suffice. 
To see why, let $\calP(x)=1$ if $x=1$ and $\calP(x)=0$ otherwise and let $\vecX$ include exactly one 1-entry in a uniformly random location.  Observe that in expectation it requires $\Omega(n)$ memory probes to locate the 1-entry in $\vecX$. Perfect obliviousness would hence imply an $\Omega(n)$ probes on any input.

We give a nearly instance optimal differentially oblivious algorithm that always returns a correct answer. Except for probability $e^{-\Omega(mp)}$ the algorithm halts after $m$ steps.

\paragraph{Our Algorithm.}
Given the access pattern of the non-private algorithm, an adversary can learn that the last probed element satisfies $\calP$. 
To hide this information, we change the stopping condition to having probed at least a (randomly chosen) threshold of elements satisfying $\calP$. 
If after $n/2$ probes the number of elements satisfying $\calP$ is below the threshold the entire
dataset is scanned. Our algorithm $\mathsf{Locate}_{\calP}$ is described in \cref{fig:locate}. On a given array $\vecX$,
algorithm $\mathsf{Locate}_{\calP}$ outputs 1 iff there exists an element in $\vecX$ satisfying the property $\calP$. 

\protocol{Algorithm $\mathsf{Locate}_{\calP}$}{A Differentially Oblivious Locate Algorithm.\vspace*{-0.5cm}}{fig:locate}{
\begin{enumerate}
\item[]
Input: dataset $\vecX=(x_1,\dots,x_n)$ 
\item
Let $c \gets 0$, $\epsilon'=\frac{\epsilon}{2\log(2/\delta)}$,  and $T \gets  \frac{1}{\epsilon'}\ln\frac{\log n}{\delta}$
\item
For $i=1$ to $n/2$ do 
\begin{enumerate}
\item
\label{step:sample}
Choose $j\in [n]$ with uniform distribution
\item
If $\calP(x_{j})=1$ then $c \gets c+1$
\item
If $i$ is an integral power of $2$ then 
\begin{enumerate}
\item
 $\hat{T} \gets T + \mathrm{Lap}(\frac{1}{\epsilon'})$ 
\item
\label{step:check-c}
If $c > \max(\hat{T},0)$ then output $1$ 
\end{enumerate}
\end{enumerate}
\item
Scan the entire dataset and if there is an element satisfying
$\calP$ then output $1$, else  output 0
\end{enumerate}
}

We remark that Algorithm $\mathsf{Locate}_{\calP}$ uses a mechanism similar to the the sparse
vector mechanism of~\cite{FOCS:HardtR10}. However, in our case instead of using a single noisy threshold across all steps, Algorithm $\mathsf{Locate}_{\calP}$ generates in each step a noisy threshold $\hat{T}=T+\mathrm{Lap}(\frac{1}{\epsilon'})$. The value of $T$ ensures that with high probability $\hat{T}>0$.
The proof of \cref{T:locate} is given in \cref{app:section5}.

\begin{theorem}
\label{T:locate}
Algorithm $\mbox{Locate}_\calP$ is an $(\epsilon, \delta(1+e^\epsilon))$-differentially oblivious algorithm that outputs 1 iff there exists an element in the array that satisfies property $\calP$. For $m=\Omega(T/p \log(T/p))$, with probability $1-e^{-\Omega(mp)}$ it halts in time at most $m$, where  
$T=\frac{2\log(2/\delta)}{\epsilon}\ln\frac{\log n}{\delta}$.
\end{theorem}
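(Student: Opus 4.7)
The plan is to verify correctness, differential obliviousness, and the stated running-time bound separately.

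Correctness follows directly from the structure of the algorithm. The output $1$ in Step~3(c)ii is issued only when $c > \max(\hat T, 0) \geq 0$, forcing $c \geq 1$ and hence the existence of at least one probed element satisfying $\calP$. Step~4 scans the entire dataset and returns the correct answer whenever it is reached. Conversely, if no element of $\vecX$ satisfies $\calP$ then the counter $c$ never increments past $0$, no loop output is triggered, and the final scan returns $0$.

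For differential obliviousness, the access pattern is fully determined by the sequence of uniform random sample indices $j_1,j_2,\dots$ (which is input-independent and can be coupled across the two neighboring datasets) together with the stopping time $\tau$. It therefore suffices to bound the privacy loss of $\tau$ conditional on the samples. I would first use \cref{lem:calAcalB} to pass to an idealized variant in which the clipping $\max(\hat T,0)$ is replaced by $\hat T$: since $T=(1/\epsilon')\ln(\log n/\delta)$, Laplace tail bounds and a union bound over the $\log(n/2)$ checks ensure $\hat T_i > 0$ at every check with probability at least $1-\delta/2$. Next, condition on the high-probability event $E$ that the differing coordinate $j^*$ is sampled at most $L^*=O(\log(1/\delta))$ times throughout the loop; since this hit count is $\mathrm{Binomial}(n/2,1/n)$ with mean $1/2$, its super-exponential tail gives $\Pr[\lnot E]\leq\delta/2$ for an appropriate $L^*$. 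Under $E$, every counter has sensitivity at most $L^*$ between neighboring inputs, and the stopping-time release can be analyzed via a sparse-vector-style change of variables on the independent Laplace noises: since the algorithm halts at the first ``above threshold'' outcome and only this single index is revealed, the privacy loss concentrates around the stopping round rather than summing naively across all $\Theta(\log n)$ checks. Combined with the parameter choice $\epsilon'=\epsilon/(2\log(2/\delta))$ and a second application of \cref{lem:calAcalB} that absorbs the two low-probability exception events, this should yield the desired $(\epsilon,\delta(1+e^\epsilon))$ guarantee.

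For the running time, note that $c_i\sim\mathrm{Binomial}(i,p)$, so $\mathbb{E}[c_i]=ip$, and Laplace concentration gives $\max(\hat T_i,0)\leq 2T$ for every check with probability $1-\delta$. The checks occur at indices $1,2,4,\ldots$, so the largest check index at most $m$ is at least $m/2$. When $m=\Omega(T/p\,\log(T/p))$, a Chernoff bound yields $\Pr[c_{m/2}<2T]\leq e^{-\Omega(mp)}$, so the algorithm halts by step $m$ with probability $1-e^{-\Omega(mp)}$. The extra $\log(T/p)$ factor in $m$ absorbs the slack required to ensure $m/2\cdot p$ comfortably exceeds $2T$ and covers the union bound over the check noises.

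The hard part will be the privacy analysis, specifically ensuring that the $\Theta(\log n)$ independent noisy threshold tests do not introduce an unwanted $\log n$ factor into the final privacy parameter. This requires carefully exploiting the ``stop at first above'' structure of the mechanism together with the change-of-variables trick underlying the sparse-vector technique, rather than appealing to naive composition over all checks.
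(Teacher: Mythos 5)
Your proposal is correct and follows essentially the same route as the paper: pass to an unclipped variant, bound the sensitivity of the running counter by controlling how often the differing index is sampled, apply a sparse-vector change-of-variables that shifts only the threshold at the stopping round so that no $\Theta(\log n)$ composition factor appears, set $\epsilon' = \epsilon / (2\log(2/\delta))$ to absorb the $O(\log(1/\delta))$ sensitivity, and finish with a concentration bound for the running time. Two small points of divergence. First, you condition only on the hit count of the single differing coordinate $j^*$, whereas the paper (Claim~\ref{cl:Fewj}) takes a union bound over \emph{all} $n$ coordinates. Your version is cleaner and avoids the $\delta < 1/n$ restriction needed to make the union bound go through, but it does not drop directly into \cref{lem:calAcalB}: that lemma requires a \emph{single} auxiliary algorithm $\calA$ and a per-dataset statistical-distance bound, while the event ``$j^*$ sampled at most $L^*$ times'' depends on which neighboring pair the adversary chose. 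To use your conditioning you should replace the ``second application of \cref{lem:calAcalB}'' with a direct two-line estimate of the form $\Pr[\calB(\vecX)\in S] \le \Pr[\calB(\vecX)\in S \mid E]\Pr[E] + \Pr[\lnot E] \le e^{\epsilon}\Pr[\calB(\vecX')\in S] + \Pr[\lnot E]$; alternatively, revert to the all-coordinates union bound so that the idealized algorithm is input-independent and \cref{lem:calAcalB} applies verbatim as in the paper. Second, for the running time you invoke a multiplicative Chernoff bound on $c_{m/2} \sim \mathrm{Binomial}(m/2,p)$, whereas the paper uses the cruder combinatorial estimate $\binom{m}{2T}(1-p)^{m-2T} \le e^{-(m-2T)p + 2T\ln m}$. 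Your Chernoff route is tighter (it would actually suffice to take $m=\Omega(T/p)$, making the extra $\log(T/p)$ factor in the theorem statement unnecessary for your argument), but since the theorem only claims the bound for $m = \Omega((T/p)\log(T/p))$, both establish the stated result.
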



\remove{
\begin{definition}
A family of permutations $\calF=\set{\pi:[n]\rightarrow [n]}$ is pairwise independent if for every $i\neq j$ and $a\neq b$
$$\Pr[\pi(i)=a\wedge\pi(j)=b]=\frac{1}{n(n-1)},$$ where the probability is over the choice of $\pi$ from $\calF$ with uniform distribution. 
\end{definition}

Note that in a family of pairwise independent
permutations $$\Pr[\pi(i)=a]=\sum_{b\neq a} \Pr[\pi(i)=a\wedge\pi(j)=b]=(n-1)\frac{1}{n(n-1)}=\frac{1}{n}.$$
\protocol{Algorithm $\mathsf{Locate}_{\calP}$}{A Differentially Oblivious Locate Algorithm.}{fig:locate}{
\begin{enumerate}
\item[]
Input: dataset $\vecX=(x_1,\dots,x_n)$ (w.l.o.g., $n$ is an integral power of $2$)
\item
Let $c \gets 0 $ and $T \gets  \frac{1}{\epsilon}\ln(\frac{(1+e^\epsilon)\log n}{\delta})$
\item
Choose a random  permutation $\pi:[n]\rightarrow [n]$
from a family of pairwise-independent permutations
\item
For $i=1$ to $n$ do 
\begin{enumerate}
\item
If $\calP(x_{\pi(i)})=1$ then $c \gets c+1$
\item
If $i$ is an integral power of $2$ then 
\begin{enumerate}
\item
 $\hat{T} \gets T + \mathrm{Lap}(\frac{1}{\epsilon})$ 
\item
\label{step:check-c}
If $c > \max(\hat{T},0)$ then output $1$ 
\end{enumerate}
\end{enumerate}
\item
Output 0
\end{enumerate}
}

We remark that Algorithm $\mbox{Locate}_\calP$  uses a mechanism similar to the sparse
vector mechanism of~\cite{FOCS:HardtR10}. However, we only add noise to each threshold (i.e., we do not add a global noise to each threshold). We add a number $T$ to each threshold such that all the noisy thresholds will be positive with high probability.  

\protocol{Algorithm $\mathsf{Locate}_{\calP}$}{A Differentially Oblivious Locate Algorithm.}{fig:locate}{
\begin{enumerate}
\item[]
Input: dataset $\vecX=(x_1,\dots,x_n)$ (w.l.o.g., $n$ is an integral power of $2$)
\item
Let $c \gets 0 $, $\epsilon'=\epsilon \log 1/\delta$,  and $T \gets  \frac{1}{\epsilon'}\ln(\frac{(1+e^\epsilon')\log n}{\delta})$
\item
For $i=1$ to $n/2$ do 
\begin{enumerate}
\item
Choose $j\in [n]$ with uniform distribution
\item
If $\calP(x_{j})=1$ then $c \gets c+1$
\item
If $i$ is an integral power of $2$ then 
\begin{enumerate}
\item
 $\hat{T} \gets T + \mathrm{Lap}(\frac{1}{\epsilon'})$ 
\item
\label{step:check-c}
If $c > \max(\hat{T},0)$ then output $1$ 
\end{enumerate}
\item
Scan the entire dataset and if there is an element satisfying
$\calP$ then output $1$ 
\end{enumerate}
\item
Output 0
\end{enumerate}
}

We remark that Algorithm $\mathsf{Locate}_{\calP}$ uses a mechanism similar to the the sparse
vector mechanism of~\cite{FOCS:HardtR10}. However, in our case we do not need to add a global noise and in each step
we use a noisy threshold $\hat{T}=T+\mathrm{Lap}(\frac{1}{\epsilon'})$, where $T$ is a constant chosen such that with high probability
$\hat{T}$ will be positive.

\begin{theorem}
\label{T:locate}
Algorithm $\mbox{Locate}_\calP$ is $(\epsilon, \delta)$-differentially oblivious. It completes in $O()$ run-time and outputs 1 if there exists an element in the array that satisfies property $\calP$, otherwise outputs 0. 
\end{theorem}

\begin{lemma}
Algorithm $\mbox{Locate}_\calP$ is $(\epsilon, \delta)$-differentially oblivious.
\end{lemma}
\begin{proof}
We first analyze a variant of $\mbox{Locate}_\calP$, denoted
by $\AlgLocatePrime$, in which \ref{step:check-c} is replaced by ``If $c > \hat{T}$ then output $1$'' (that is,
the algorithm does not check if $\hat{T}>0$).
We analyze the privacy of $\AlgLocatePrime(\vecX')$ similarly to the analysis of the sparse
vector mechanism in~\cite{FOCS:HardtR10}.

Fix a permutation $\pi$. Let $\vecX$ and $\vecX'$ be two neighboring datasets that such that $\calP(x_{\pi(j)})=1$ and $\calP(x'_{\pi(j)})=0$ for some $j$. 
Denote by $\tau=(\tilde{T}_1,\dots,\tilde{T}_{\log n})$ the values of the thresholds in an execution of  
 $\AlgLocatePrime$ where each threshold is rounded up to the smallest integer greater than $\hat{T}$.
 Furthermore, let $\ell_\tau\in [\log n]$ be the index such that
 $\AlgLocatePrime$ on input $\vecX$ outputs 1 when $i=2^{\ell_\tau}$.
 Observe that 
 in  each
 execution of Step~\ref{step:check-c} the count $c$ on input $\vecX$
 is bigger by one or equal to the count on input $\vecX'$.
 Thus,  $\AlgLocatePrime$ on input $\vecX'$ with thresholds 
 $\tau'=(\tilde{T}_1,\dots,\tilde{T}_{\ell_\tau-1},\tilde{T}_{\ell_\tau}-1,\tilde{T}_{\ell_\tau+1},\dots,\tilde{T}_{\log n})$
 outputs 1 when $i=2^{\ell_\tau}$.
\begin{align*}
\Pr&[\access^{\AlgLocatePrime}(\vecX) \in S]  \\
& = \sum_{\tau=(\tilde{T}_1,\dots,\tilde{T}_{\log n})} \Pr[\access^{\mbox{Locate'}_\calP}(\vecX) \in S\,|\,\tilde{T}_1,\dots,\tilde{T}_{\log n}] \Pr[\tilde{T}_1,\dots,\tilde{T}_{\log n}]\nonumber\\
 & = \sum_{\tau=(\tilde{T}_1,\dots,\tilde{T}_{\log n})} \Pr[\access^{\mbox{Locate'}_\calP}(\vecX') \in S\,
 |\,\tilde{T}_1,\dots,\tilde{T}_{\ell_\tau-1},\tilde{T}_{\ell_\tau}-1,\tilde{T}_{\ell_\tau+1},\dots,\tilde{T}_{\log n}] 
 \cdot \Pr[\tilde{T}_1,\dots,\tilde{T}_{\log n}]\\
 &  \leq e^{\epsilon}
\sum_{\tau=(\tilde{T}_1,\dots,\tilde{T}_{\log n})} \Pr[\access^{\mbox{Locate'}_\calP}(\vecX') \in S\,
 |\,\tilde{T}_1,\dots,\tilde{T}_{\ell_\tau-1},\tilde{T}_{\ell_\tau}-1,\tilde{T}_{\ell_\tau+1},\dots,\tilde{T}_{\log n}] \\
 & \hspace*{5cm}
 \cdot \Pr[\tilde{T}_1,\dots,\tilde{T}_{\ell_\tau-1},\tilde{T}_{\ell_\tau}-1,\tilde{T}_{\ell_\tau+1},\dots,\tilde{T}_{\log n}]\\
 &= e^{\epsilon} \Pr[\access^{\mbox{Locate'}_\calP}(\vecX') \in S].
 \end{align*}
 Similarly, 
\begin{align*}
\Pr& [\access^{\mbox{Locate'}_\calP}(\vecX) \in S] \\
&  \geq e^{- \epsilon}
\sum_{\tau=(\tilde{T}_1,\dots,\tilde{T}_{\log n})} \Pr[\access^{\mbox{Locate'}_\calP}(\vecX') \in S\,
 |\,\tilde{T}_1,\dots,\tilde{T}_{\ell_\tau-1},\tilde{T}_{\ell_\tau}-1,\tilde{T}_{\ell_\tau+1},\dots,\tilde{T}_{\log n}] \\
 & \hspace*{5cm}
 \cdot \Pr[\tilde{T}_1,\dots,\tilde{T}_{\ell_\tau-1},\tilde{T}_{\ell_\tau}-1,\tilde{T}_{\ell_\tau+1},\dots,\tilde{T}_{\log n}]\\
 &= e^{- \epsilon} \Pr[\access^{\mbox{Locate'}_\calP}(\vecX') \in S].
\end{align*}

We next prove that  $\mbox{Locate}_\calP$ is $(\epsilon,\delta)$-differentially oblivious using \cref{lem:calAcalB},
that is we prove that for every dataset $\vecX$, the statistical distance between $\access^{\mbox{Locate}_\calP}(\vecX)$ and $\access^{\mbox{Locate'}_\calP}(\vecX)$ is at most $\delta/(e^\epsilon+1)$. Notice that if all the thresholds are positive then $\mbox{Locate}_\calP(\vecX)$ and $\mbox{Locate'}_\calP(\vecX)$ have  the same access pattern.
We next prove that the probability that a threshold $\hat{T}=T+\mathrm{Lap}(\frac{1}{\epsilon})$ is negative is at most 
$\frac{\delta}{(1+e^\epsilon)\log n}$. Recall that $\Pr[\mathrm{Lap}(\frac{1}{\epsilon})\leq -t/\epsilon] = \frac{1}{2} e^{-t}$ for every $t>0$.
Thus,
$\Pr[\hat{T} \leq 0]=\Pr[\mathrm{Lap}(\frac{1}{\epsilon}) \leq   \frac{1}{\epsilon} \ln(\frac{(1+e^\epsilon)\log n}{\delta} ]=\frac{\delta}{(1+e^\epsilon)\log n}$.
Let $A$ be the event that at  at least one of the $\log n$ thresholds $T$ is at most 0, by the union bound the probability of $A$  is at most $\delta/(1+e^\epsilon)$.
Thus, for every set of access patterns $S$
\begin{align*}
|\Pr[&\access^{\mbox{Locate}_\calP}(\vecX) \in S]  -\Pr[\access^{\mbox{Locate'}_\calP}(\vecX) \in S]| \\
&  =
\Big|\Pr[\access^{\mbox{Locate}_\calP}(\vecX) \in S|A]\Pr[A]+ \Pr[\access^{\mbox{Locate}_\calP}(\vecX) \in S|\bar{A}]\Pr[\bar{A}]  \\
& \quad \quad -\Pr[\access^{\mbox{Locate'}_\calP}(\vecX) \in S|A]\Pr[A]
-\Pr[\access^{\mbox{Locate'}_\calP}(\vecX) \in S|\bar{A}]\Pr[\bar{A}]
\Big| \\
 & =
\Big|\Pr[\access^{\mbox{Locate}_\calP}(\vecX) \in S|A] 
-\Pr[\access^{\AlgLocatePrime}(\vecX) \in S|A]
\Big| \Pr[A] \\
& \leq \Pr[A] \leq \delta(1+e^\epsilon).
\end{align*}
Thus, by \cref{lem:calAcalB}, algorithm $\mbox{Locate}_\calP$ is $(\epsilon,\delta)$-differentially oblivious.
\end{proof}

\begin{lemma}
Let $p$ be the probability that a uniformly chosen element in $\vecX$ satisfies $\calP$. Then, with probability
at least $1-1/n$, algorithm $\mbox{locate}_{\calP}$ probes at most 
... memory locations.
\end{lemma}
\begin{proof}
Let $m=aT/p$ for an integer $a >4$ and define $m$ binary random variables $X_1,\dots,X_m$, where $X_i=\calP(x_{\pi(i)})$ for a permutation chosen with uniform
distribution from a family of pairwise independent permutations. Furthermore, let $X=\sum_{i=1}^m X_m$, that is,
$X$ is a random variable counting the number of  elements satisfying $\calP$ among the first $m$ sampled elements (i.e., the counter $c$ in $\mbox{locate}_{\calP}$).

We claim that $X_i$ and $X_j$ are negatively correlated for every $i\neq j$.
Indeed, $E[X_i]=E[\calP(x_{\pi(i)})]=p$ (since $\pi(i)$ is uniformly distributed in $[n]$) and similarly, $E[X_j]=p$.
However, $E[X_i X_j] = \Pr[X_i=1 \wedge X_j=1]=\Pr[X_i=1 ]\Pr[X_j=1|X_j=1] \leq p^2$ (since  
$\Pr[X_j=1|X_j=1]=(np-1)/(n-1) < p$).

It holds that $E[X]=mp=aT$ and $\var[X_i]=p(1-p)\leq p$.
Thus, by  Chebychev's inequality
$$\Pr\left[\,X \leq 2T \,\right]\leq \Pr\left[\,\Big|X-E[X]\Big| \geq (a-2)T \,\right] \leq \frac{\sum_{i=1}^m \var[X_i]}{t^2}.$$
\end{proof}
\begin{lemma}
Algorithm $\mbox{locate}_P$ uses $O(\log n)$ local memory.
\end{lemma}
\begin{proof}
\end{proof}
}

\section{Differentially Oblivious Prefix Sum}\label{sec:do-sum}

Suppose that there is a dataset consisting of sorted sensitive user records, and one would like to compute the sum of all records in the (sorted) dataset that are less than or equal to a value $a$ in a way that respects individual
user's privacy. We call this task differentially oblivious prefix sum.
For the definition of privacy we say that two datasets of size $n$ are neighbors if they agree on $n-1$ elements (although, as  sorted arrays they can disagree on many indices).  For example, $(1,2,3,4)$ and  $(1,3,4,5)$ are neighbors and should have similar access pattern.

Without privacy one can find the greatest record less than or equal to value $a$, and then compute the prefix sum by a quick scan through all records appearing before such record. 
Any perfectly secure algorithm must read the entire dataset (since it is possible that all elements are smaller than $a$). 
Here, we give a differentially oblivious prefix sum algorithm that for many instances is much faster than any perfectly oblivious algorithm.


\paragraph{Intuition.}
Absent privacy requirements, using binary search, one can find the greatest element less than or equal to $a$, and then compute the prefix sum by a quick scan through all records that appear before such record. However, the binary search access pattern allows the adversary to gain sensitive information about
the input. 
Our main idea is to approximately simulate the binary search and obfuscate the memory accesses
to obtain differential obliviousness. 
In order to do that, we first divide the input array into $k$ chunks (where $k$ is polynomial in $1/\epsilon,\log 1/\delta$, and $\log n$). Then, we find the chunk that contains the greatest element less than or equal to $a$ by comparing the first element (hence, the smallest element) of each chunk to $a$.  
Let $I$ be the index of such chunk.
Next, we compute a noisy interval that contains $I$ using the Laplacian distribution. We iteratively repeat this process on the noisy interval, where in each step we eliminate more than a quarter of the elements of the interval.
We continue until the size of the array is less than or equal to $k$. Next, we scan all elements in the remaining array and find the index of the greatest element smaller than or equal to $a$. Let $i$ be the index of such element; we compute the prefix sum by scanning the array $\vecX$ until index $i$.  


\paragraph{The Search Algorithm.}
%
We present a search  algorithm in \cref{fig:search}; on input
$\vecX=(x_1,\dots,x_n)$ and $a$
this algorithm finds the largest index $I$ such that $x_I \leq a$. 
To compute the prefix sum, we compute $\hat{I}=I+\Lap(1/\epsilon) + \frac{\log 1/\delta}{\epsilon}$ and scan the first $\hat{I}$ elements of the dataset, summing only the first $I$. 
We  show in \cref{T:search} that our search algorithm is $(\epsilon,\delta)$-differentially 
oblivious.


\remove{
\protocol{
Algorithm $\AlgSearch(\vecX,a,k,\epsilon',\delta')$}
{Differentially Oblivious Search Algorithm.}
{fig:search}
{$\myspace$ $N \gets |\vecX|$ and $cs \gets N/k$ \\
$\myspace$ If $N \leq k$  then return $\mathsf{Find}(a, \vecX) $\\
$\myspace$ $\vecY \gets \{\vecX[i \cdot cs] \mid i \in [k]\}$\\
$\myspace$ $I \gets \mathsf{Find}(a, \vecY)$ and $\bar{I} \gets I + \mathrm{Lap}(\frac{1}{\epsilon'}) $\\
$\myspace$ $A = (\bar{I} - \frac{\log{1/\delta'}}{\epsilon'}) \cdot cs$ and $B = (\bar{I} + \frac{\log{1/\delta'}}{\epsilon'}) \cdot cs$\\
$\myspace$ $\AlgSearch(\vecX[A \cdots B],a,k,\epsilon',\delta')$
}
}
\remove{
\protocol{Algorithm $\AlgSearch$}{Differentially Oblivious Search Algorithm.}{fig:search}{
\begin{enumerate}
\item[]
Input: dataset $\vecX=(x_1,\dots,x_n)$, value $a$ 
\item
Let $\epsilon' = \frac{\epsilon}{\log n}$, $\delta' = \frac{\delta}{\log n}$, $k \gets \frac{4 \log (1/\delta')}{\epsilon'}$ and $c \gets n/k$
\item
If $n \leq k$ then scan the entire dataset and return index of the greatest element smaller or equal to $a$, return 0 if there is no such element
\item
$\vecY[i] = \vecX[i \cdot c]$ for every $i \in [k]$
\item
Scan the entire dataset $\vecY$ and find index of the greatest element smaller or equal to $a$ and let $I$ be such index, $I$ is equal to 0 if there is no such element
\item $\hat{I} \gets I + \mathrm{Lap}(\frac{1}{\epsilon'})$
\item 
$A = (\hat{I} - \frac{\log{1/\delta'}}{\epsilon'}) \cdot c$ and $B = (\hat{I} + \frac{\log{1/\delta'}}{\epsilon'}) \cdot c$
\item
Recurse on $\vecX[A \cdots B]$
\end{enumerate}
}
}

\protocol{Algorithm $\AlgSearch$}{A Differentially Oblivious Search Algorithm.\vspace*{-0.5cm}}{fig:search}{
\begin{enumerate}
\item[]
Input: a dataset $\vecX=(x_1,\dots,x_n)$ and a value $a$ 
\item
Let $\epsilon' \gets \frac{\epsilon}{2.5\log n}$, $\delta' \gets \frac{\delta}{2.5\log n}$, $k \gets \ceil{\frac{4 \log (1/\delta')}{\epsilon'}}$,  $\min\gets 0$, and $\max\gets n$
\item
While $\max-\min > k$ do
\begin{enumerate}
\item
$c \gets \floor{(\max-\min)/k}$
\item
Let $\vecY = (y_1,\dots,y_k)$, where $y_i= x_{\min+i \cdot c}$ for every $i \in [k]$
\item
\label{step:I}
Scan the entire dataset $\vecY$ and find the maximal index $I$ such that $y_I \leq a$; if there is no such element then $I \gets 0$
\item $\mathrm{noise} \gets \Lap(\frac{1}{\epsilon'})$
\item 
$\min = \max\set{0,\min+\floor{(I+ \mathrm{noise}- \frac{\log{1/\delta'}}{\epsilon'}) \cdot c}}$ and $\max = \min\set{n,\min+\floor{(I+ \mathrm{noise} + \frac{\log{1/\delta'}}{\epsilon'}+1) \cdot c}}$
\end{enumerate}
\item
Scan the entire dataset $\vecX$ between $\min$ and $\max$ and return the the maximal index $I$ such that $x_I \leq a$; if there is no such element then $I \gets 0$
\end{enumerate}
}

\begin{remark} We prove that algorithm $\AlgSearch$ is an $(\epsilon,0)$-differentially private algorithm that returns a correct index with probability at least $1-\beta$. We could change it to an $(\epsilon,\delta)$-differentially private algorithm that never errs. This is done by truncating the noise to $\frac{\log 1/\delta'}{\epsilon'}$.
\end{remark}

\begin{theorem}
\label{T:search}
Let $\beta < 1/n$ and $\epsilon < \log^2 n$. Algorithm $\AlgSearch$ is an $(\epsilon, 0)$-differentially oblivious algorithm that, for any input array with size $n$ and $a \in \R$, returns a correct index with probability 
at least $1-\beta$. The running time of Algorithm $\AlgSearch$  is
$O(\frac{1}{\epsilon}\log^2{n}\log{\frac{1}{\beta}})$.
\end{theorem}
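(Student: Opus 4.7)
}

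The plan is to handle the three claims (privacy, correctness, and running time) separately, in this order, and reduce each to a routine Laplace-mechanism / composition argument once we have identified the right ``per-iteration view.''

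\emph{Privacy.} The key observation is that, after fixing the initial state, the access pattern of $\AlgSearch$ in iteration $t$ is completely determined by the pair $(\min_t,\max_t)$ carried into the iteration together with the noisy index $\hat{I}_t=I_t+\Lap(1/\epsilon')$ produced inside it (the scan of $\vecY$, the update of $\min,\max$, and the decision whether to continue are all deterministic functions of these quantities). So the entire access pattern is a post-processing of the sequence $(\hat{I}_1,\hat{I}_2,\dots,\hat{I}_L)$, where $L\le \log_{4/3} n$. Next, I would argue that in a single iteration, $I_t$ has sensitivity $1$ with respect to changing one entry of $\vecX$: since $\vecY$ is formed by sampling the indices $\min_t+ic$ for $i\in[k]$, which are all distinct indices of $\vecX$, changing one entry of $\vecX$ changes at most one entry of $\vecY$ and hence shifts $I_t$ by at most $1$. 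By the Laplace mechanism, releasing $\hat{I}_t$ is $(\epsilon',0)$-DP. Applying basic (adaptive) composition over the at most $2.5\log n$ iterations and invoking closure of DP under post-processing yields $(\epsilon,0)$-differential obliviousness, with the $2.5$ slack absorbing the final scan between the terminal $\min$ and $\max$, which is itself a post-processing.

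\emph{Correctness.} Call iteration $t$ \emph{good} if $|\mathrm{noise}_t|\le \log(1/\delta')/\epsilon'$. I claim: (i) if every iteration is good, then the true answer index $I^\ast$ (the largest index with $x_{I^\ast}\le a$) lies in $[\min_{t+1},\max_{t+1}]$ at the end of each iteration; and (ii) the probability that some iteration is not good is at most $L\delta'\le \beta$ (here I am identifying $\delta'=\beta/(2.5\log n)$; the algorithm's $\delta$ is the correctness parameter $\beta$). For (i), monotonicity of $\vecX$ implies that the true chunk containing $I^\ast$ has index, in the chunked array $\vecY$, either equal to $I_t$ or equal to $I_t+1$; the conservative update of $[\min,\max]$ by an additive $\pm \log(1/\delta')/\epsilon'$ around $\hat{I}_t\cdot c$ therefore contains both endpoints of that chunk when $|\mathrm{noise}_t|\le \log(1/\delta')/\epsilon'$. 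Statement (ii) is the standard Laplace tail bound $\Pr[|\Lap(1/\epsilon')|\ge \log(1/\delta')/\epsilon']=\delta'$ together with a union bound. On the final iteration we perform an exact scan of an interval of length at most $k$, so the index returned is exactly $I^\ast$.

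\emph{Running time and termination.} Each iteration does $O(k)$ work, all other steps being $O(1)$. The interval length satisfies
\[
\max_{t+1}-\min_{t+1} \le \Bigl(\tfrac{2\log(1/\delta')}{\epsilon'}+1\Bigr)\cdot c \le \tfrac{2\log(1/\delta')/\epsilon'+1}{k}\,(\max_t-\min_t),
\]
which with $k=\lceil 4\log(1/\delta')/\epsilon'\rceil$ is at most a constant less than $1$ (say $3/4$). Hence $L=O(\log n)$, and the total work is
$O(L\cdot k)=O\!\bigl(\tfrac{\log n}{\epsilon'}\log(1/\delta')\bigr)=O\!\bigl(\tfrac{\log^2 n}{\epsilon}\log(\log n/\beta)\bigr)$,
which, using $\beta<1/n$, simplifies to $O\!\bigl(\tfrac{1}{\epsilon}\log^2 n\log(1/\beta)\bigr)$.

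The main obstacle I anticipate is bookkeeping in part (i) of correctness: one must verify that integer flooring in the definitions of $c$, $\min$, and $\max$ never throws the true $I^\ast$ out of the surviving interval, and that the clamps to $[0,n]$ are consistent with the ``$I=0$ if no element is $\le a$'' convention. Everything else is a direct application of the Laplace mechanism, basic composition, and a geometric-shrinkage argument.
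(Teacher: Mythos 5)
Your overall decomposition (privacy, correctness, running time) and all three conclusions match the paper's proof, but the justification you give for the per-iteration sensitivity bound in the privacy part is not correct, and it happens to be exactly the place where the sortedness of the array must enter the argument.

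You write that ``changing one entry of $\vecX$ changes at most one entry of $\vecY$ and hence shifts $I_t$ by at most $1$.'' This would be true in a model where neighboring arrays differ in exactly one \emph{position}, but that is not the neighboring relation used in \cref{sec:do-sum}: two datasets are neighbors if they agree on $n-1$ \emph{elements}, and after re-sorting the two arrays can disagree on $\Theta(n)$ indices (the paper's own example, $(1,2,3,4)$ vs.\ $(1,3,4,5)$, already disagrees in three positions). In particular many, even all, of the $k$ sampled entries $y_i = x_{\min + ic}$ can change, so the premise of your sensitivity argument fails. The conclusion you want, $|I_t(\vecX)-I_t(\vecX')|\le 1$, is nevertheless correct, but it has to be derived from the interleaving structure of two neighboring sorted arrays: if, say, $x_j > x'_j$ and $x_i = x'_i$ for $i<j$, then $x_{i-1}\le x'_i\le x_i$ for all $i$. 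This interleaving is inherited by the subsampled arrays (since the sampled positions $\min+ic$ are increasing with gaps $\ge 1$, one gets $y_{i-1}\le y'_i\le y_i$), and that is what forces the count of entries $\le a$ --- and hence $I_t$ --- to move by at most one. The paper proves exactly this; your proof needs this lemma in place of the ``only one entry of $\vecY$ moves'' claim. Once you replace that step, the rest of your argument (post-processing of the adaptively generated noisy sequence, basic composition over $O(\log n)$ rounds, Laplace tail bound for correctness, geometric shrinkage for the running time) lines up with the paper's proof, modulo the floor/clamp bookkeeping you correctly flag as the remaining routine check.
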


\cref{T:search} is proved in \cref{app:section6}.

\remove{
\begin{theorem}
\label{T:search}
For any $\epsilon > 0$ and any $0 < \delta < 1$, there exist $(\epsilon,\delta)$-differentially 
oblivious algorithm such that for any input array with size $N$ and $a \in \R$, the algorithm 
completes in $O(\frac{1}{\epsilon}\log^2{N}(\log{\log{N}}+\log{\frac{1}{\delta}}))$ runtime and finds the greatest elements of the array that is smaller than or 
equal to $a$.
As a special case, for $\epsilon = O(1)$ and $\delta=o(1/N)$, there 
exists an $(\epsilon,\delta)$-differentially oblivious search algorithm such that it completes 
in $O(\log^2{N}\log{\log{N}})$ runtime.
\end{theorem}
}

\subsection{Dealing with Multiple Queries}
We extend our prefix sum algorithm to answer multiple queries. We can answer a bounded number of queries by running the differentially oblivious prefix sum algorithm multiple times.
That is, when we want an $(\epsilon,0)$-oblivious algorithm correctly answering $t$ queries with probability at least $1-\beta$, we execute algorithm $\AlgSearch$ $t$ times with privacy parameter $\epsilon/t$ and error probability $\beta/t$ (each time also computing the appropriate prefix sum).  Thus, the running time of the algorithm is $O(\frac{t^2}{\epsilon} \log^2 n \log \frac{t}{\beta})$ (excluding the scan time for computing the sum). 

On the other hand, we can use an ORAM to answer unbounded number of queries. That is, in a pre-processing stage we store the $n$ records and for each record we store the sum of all records up to this record. 
Thereafter, answering each query will require one binary search.
Using the ORAM of~\cite{AKLNS18}, the pre-processing will take time $O(n \log n)$ and answering each query will take time $O(\log^2 n)$. Thus, the ORAM algorithm is more efficient when $t\geq \sqrt{n}$.  

We use ORAM along with our differentially oblivious prefix sum algorithm to answer unbounded number of queries while preserving privacy, combining the advantages of both of the previous algorithms. 

\protocol{Algorithm $\AlgMultiSearch$}{A Differentially Oblivious Search Algorithm for Multiple  Queries.}{fig:msearch}{
\begin{enumerate}
    \item[] Input: a dataset $\vecX=(x_1,\dots,x_n)$ 
    \item $t\gets 1$, $M \gets 0$
    \item
    For every query $a$:
    \begin{enumerate}
        \item if the  greatest element in the ORAM is greater than $a$ or all records are in the ORAM (that is $M=n$) then answer the query using the ORAM
        \item Otherwise,
        \begin{enumerate}
           \item execute algorithm $\AlgSearch$ with privacy parameter $\frac{\epsilon}{t \log n}$ and accuracy parameter $\beta/\sqrt{n}$ for the database starting at record $M+1$ and let $I$ the largest index in this database such that $x_I \leq a$
            \item insert the first $\max\set{I,2t}$ elements of this database to the ORAM;
             for each element also insert the sum of all elements in the array up to this element
            \item $t \gets t+1$, $M\gets M+\max\{I,2t\}$
        \end{enumerate}
    \end{enumerate}
\end{enumerate}
 }
 
\begin{theorem}
Algorithm $\AlgMultiSearch$, described in \cref{fig:msearch}, is an $(\epsilon,0)$-oblivious algorithm, which executes Algorithm $\AlgSearch$ at most $O(\sqrt{n})$ times,
where the run time of the $t$-th execution is $O(\frac{t}{\epsilon} \log^3 n \allowbreak \log \frac{n}{\beta})$, scans the original database at most once, and in addition each query run time is at most $O(\log^2 n)$.
\end{theorem}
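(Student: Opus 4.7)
The plan is to separate the combinatorial bookkeeping (bounds on the number of invocations and on runtime) from the privacy claim, which is the main obstacle. The items to establish are: (a) at most $O(\sqrt{n})$ invocations of $\AlgSearch$; (b) the raw dataset $\vecX$ is scanned at most once in total; (c) the $t$-th invocation runs in the stated time; (d) the per-query ORAM branch costs $O(\log^2 n)$; and (e) the overall $(\epsilon,0)$-oblivious guarantee.

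For (a) and (b), I would track the ``high-water mark'' $M$. On the $t$-th call to $\AlgSearch$ the algorithm inserts $\max\{I,2t\}\ge 2t$ consecutive records into the ORAM and advances $M$ by the same amount; hence after $T$ invocations $M\ge\sum_{t=1}^{T}2t = T(T+1)$. Since $M$ is capped at $n$ (once $M=n$ the \emph{else} branch is never triggered again), we obtain $T=O(\sqrt{n})$. Moreover, every $\AlgSearch$ call only reads positions of $\vecX$ in the window $(M,n]$ for the current $M$, and $M$ is non-decreasing; hence no raw-database position is read twice.

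For (c), I would instantiate \cref{T:search} with privacy parameter $\epsilon'=\epsilon/(t\log n)$ and failure probability $\beta'=\beta/\sqrt{n}$, which yields running time $O(\tfrac{1}{\epsilon'}\log^2 n\log\tfrac{1}{\beta'}) = O(\tfrac{t}{\epsilon}\log^3 n\log\tfrac{n}{\beta})$. For (d), each query branch performs a constant number of ORAM reads (one to fetch $x_M$ for the comparison, and, if taking the ORAM branch, one to fetch the stored prefix-sum entry), each of cost $O(\log^2 n)$ by the ORAM of~\cite{AKLNS18}.

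The main obstacle is (e). The observable access pattern decomposes into (i) ORAM operations, which are perfectly oblivious; (ii) the at most $T=O(\sqrt{n})$ access patterns of $\AlgSearch$, where the $t$-th one is $(\epsilon/(t\log n),0)$-differentially oblivious by \cref{T:search}; and (iii) the sequence of branch choices (ORAM vs.\ $\AlgSearch$) observable from the access pattern. Basic adaptive composition across the $\AlgSearch$ invocations contributes $\sum_{t=1}^{O(\sqrt{n})}\frac{\epsilon}{t\log n} \le \frac{\epsilon}{\log n}\cdot (1+\ln\sqrt{n}) = O(\epsilon)$, and the hidden constants can be tuned so that the total is at most $\epsilon$. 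The delicate point is that the branch choice itself depends on the data-dependent comparison $x_M\ge a$; here I would argue that $x_M$ is fetched by an oblivious ORAM read, so the comparison outcome is a deterministic post-processing of the view already accounted for in the composition, and therefore does not inflate the privacy budget. Combining the three components via adaptive composition then yields the required $(\epsilon,0)$-obliviousness.
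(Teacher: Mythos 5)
Your proof follows the same route as the paper's: bound the number of $\AlgSearch$ invocations by $\sqrt{n}$ via the observation that $M$ advances by at least $2t$ on the $t$-th call (so $\sum_{t\le\sqrt n}2t\ge n$), and then apply basic composition to $\sum_t \frac{\epsilon}{t\log n}\le \frac{\epsilon}{\log n}(\ln\sqrt n+1)\le\epsilon$. The paper's proof is terser than yours --- it does not explicitly argue (b), (c), or (d) --- but the bookkeeping and the instantiation of Theorem~\ref{T:search} you supply are exactly what one would write to fill those in, and they are correct.

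Where you diverge is in the part of (e) that the paper handles with a single assertion (``we only pay for privacy in the executions of $\AlgSearch$''). You correctly identify the branch choice as the delicate point, but the post-processing justification you offer does not hold up. The branch is selected by the comparison $x_M \ge a$. The ORAM read that fetches $x_M$ is perfectly oblivious, so that read's access pattern leaks nothing; but $x_M$ itself is a data value, not a function of the observed view. The resulting branch choice (ORAM branch vs.\ $\AlgSearch$ branch) \emph{is} part of the observable access pattern and is a deterministic, data-dependent selection --- that is not post-processing of the view, and a deterministic data-dependent bit is in general not $(\epsilon,0)$-indistinguishable between neighbors. A related subtlety you do not flag: the observable quantity $\max\{I,2t\}$ (the number of elements inserted into the ORAM) depends on $\AlgSearch$'s \emph{output} $I$, but Theorem~\ref{T:search} only guarantees that $\AlgSearch$'s \emph{access pattern} is $(\epsilon',0)$-oblivious, not that $I$ is private. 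Both of these points are also left unaddressed by the paper's own proof, so you are in good company --- but your proposal presents the post-processing claim as if it resolves the issue, when in fact it would need a different argument (e.g., accounting for the branch/insertion leakage in the privacy budget, or modifying the algorithm to noise $I$ before using it, as the single-query prefix-sum does).
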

\begin{proof}
First note that we only pay for privacy in the executions of algorithm $\AlgSearch$ (reading and writing to the ORAM is perfectly private). In the $t$-th execution of algorithm $\AlgSearch$, we insert at least $2t$ elements to the ORAM, thus after $\sqrt{n}$ executions we inserted at least $\sum_{t=1}^{\sqrt{n}} 2t=n$ elements to the ORAM. 

By simple composition, algorithm $\AlgMultiSearch$ is
$(\epsilon',0)$-differentially private, where $$\epsilon'=\sum_{t=1}^{\sqrt{n}} \frac{\epsilon}{t \log n} \leq  \frac{\epsilon}{ \log n} (\ln\sqrt{n} +1) \leq  \epsilon,$$ where the last inequality is implied by the sum of the harmonic series.
\end{proof}



\clearpage
\bibliographystyle{plain}
\bibliography{local}


\appendix

\section{Missing Proofs}
\label{appendix:missing_proofs}

\subsection{Proof of \cref{lem:calAcalB}}

\begin{proof}
Let $\vecX$ and $\vecY$ be two neighboring datasets and $S$ be a sets of outputs. Then,
\begin{align*}
    \Pr[\calB(\vecX)\in S] & \leq  \Pr[\calA(\vecX)\in S] +\gamma \\
    & \leq   e^\epsilon\Pr[\calA(\vecY)\in S] +\gamma \\
    & \leq   e^\epsilon(\Pr[\calB(\vecY)\in S]+\gamma) +\gamma \\
    & = e^\epsilon\Pr[\calB(\vecY)\in S]+(1+e^\epsilon)\gamma.
\end{align*}
\end{proof}

\subsection{Proof of the Correctness and Privacy of Algorithm $\AlgTester_\calT$}
\label{app:section3}

\cref{T:tester} is implied by the following lemmas.

\remove{
\begin{claim}
\label{cl:Fewnode}
Let $\ell \geq \log n/\log\log n$ and $\delta \geq 2^{-\frac{n \beta \epsilon}{4e \cdot f(\beta,\gamma)}}$.
The probability that there exists a node $v \in V$ such that tester $\calT$ samples the node $v$ in Step~\ref{step:tester} more than $2\ell$ times is less that $2^{-\ell}.$
\end{claim}
\begin{proof}
Fix a node $v$. Observe that $\frac{2eT \cdot f(\beta,\gamma)}{ \beta} \leq n$, since $\delta \geq 2^{-\frac{n \beta \epsilon}{4e \cdot f(\beta,\gamma)}}$. Thus,  
the probability that the node $v$ is sampled more than $2\ell$ times is less than $\binom{4T / \beta}{2\ell}(\frac{f(\beta,\gamma)}{n})^{2\ell}
\leq \left(\frac{2eT}{ \beta \ell}\right)^{2\ell} \frac{f(\beta,\gamma)^{2\ell}}{n^{2\ell}} < \ell^{-2\ell}  
< 2^{-\ell}/n.$
The claim follows by the union bound.
\end{proof}
}

\begin{lemma}
Algorithm $\mbox{Tester}_\calT$ is $(\epsilon, \delta(1+e^{\epsilon}))$-differentially oblivious.
\end{lemma}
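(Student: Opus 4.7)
The plan is to identify the only data-dependent component of the access pattern of $\mbox{Tester}_\calT$ and then reduce the privacy analysis to that component. First I would observe that tester $\calT$ samples a uniformly random subset $A_i$ of the current vertex set $V_i$, and the graph update in Step~\ref{step:induce} is a deterministic function of $A_i$. Therefore, for any fixing of the algorithm's internal coins, the sequence $A_1,\dots,A_{4T}$ and the corresponding memory accesses during all $4T$ iterations are completely determined independently of the input graph $G$. The running counter $c$, the noisy threshold $\hat T$, and the constant-size intermediate values can be kept in the $O(1)$-word private cache stipulated by the model, so the only input-dependent memory operation is the writing of the final $0/1$ output in Step~\ref{step:check-c-tester}.

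Next I would bound the sensitivity of $c$ as a function of the graph. Given neighboring graphs $G,G'$ that differ only on the neighbors of a single vertex $v$, I would couple the two executions using identical internal randomness. In every iteration either $v\notin A_i$, in which case the induced subgraphs $G[A_i]$ and $G'[A_i]$ coincide (they agree on all edges not incident to $v$) and $\calT$ behaves identically in both executions; or $v\in A_i$, in which case $\calT$'s outputs may differ by at most $1$, but $v$ is then deleted so all subsequent iterations are again perfectly coupled. Hence $|c(G)-c(G')|\le 1$ and the coupled executions differ only in the bit written in Step~\ref{step:check-c-tester}.

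Finally I would apply the Laplace-mechanism argument together with \cref{lem:calAcalB}. Define an auxiliary algorithm $\calA$ identical to $\mbox{Tester}_\calT$ except that Step~\ref{step:check-c-tester} is replaced by ``output $1$ iff $c\ge\hat T$'' (no truncation at $4T$). The released bit is the sign of $(c(G)-3T)-\Lap(1/\epsilon)$, and since $c(G)-3T$ has sensitivity $1$, the standard Laplace analysis plus post-processing shows that $\calA$ is $(\epsilon,0)$-differentially oblivious. Moreover $\calA$ and $\mbox{Tester}_\calT$ produce exactly the same access pattern whenever $\hat T\le 4T$, and the complementary event has probability
\[
\Pr[\Lap(1/\epsilon)>T]=\tfrac{1}{2}e^{-\epsilon T}=\delta
\]
by the choice $T=\ln(1/(2\delta))/\epsilon$. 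Thus the statistical distance between $\calA(G)$ and $\mbox{Tester}_\calT(G)$ is at most $\delta$ for every $G$, and \cref{lem:calAcalB} yields $(\epsilon,(1+e^\epsilon)\delta)$-differential obliviousness.

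The main obstacle I anticipate is justifying the coupling in the second step: one must argue carefully that the random sets $A_i$ (and hence the induced vertex sets $V_{i+1}$) are drawn from identical distributions in both executions until the unique iteration, if any, that samples $v$. This uses the fact that neighbors differ only in the adjacency entries of $v$, so the randomized procedure defining $A_i$ is a function only of $V_i$ and fresh coins, not of $G$'s edges; once this is in place, the remaining bookkeeping and the application of \cref{lem:calAcalB} are straightforward.
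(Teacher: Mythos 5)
Your proof is correct and follows essentially the same route as the paper: define a variant $\AlgTesterPrime$ that drops the truncation at $4T$, couple executions on neighboring graphs through the shared random choices of subsets $A_i$ to argue the counter $c$ has sensitivity $1$, invoke the Laplace mechanism to get $(\epsilon,0)$-obliviousness of the variant, bound the statistical distance between the variant and $\mbox{Tester}_\calT$ by the probability that $\hat T>4T$ (which is $\delta$), and close with \cref{lem:calAcalB}. Your coupling argument (tracking whether $v\in A_i$) is a slightly more explicit version of the paper's ``fix the random choices of subsets $A$'' observation, and your Laplace step via post-processing is an equivalent, if cleaner, phrasing of the paper's sum over $\tilde T=a$.
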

\begin{proof}
We first analyze a variant of $\mbox{Tester}_\calT$, denoted
by $\AlgTesterPrime$, in which Step~\ref{step:check-c-tester} is replaced by ``If $c > \hat{T}$ then output $1$'' (that is,
the algorithm does not check if $c>\min\set{4T,\hat{T}}$ before deciding in the positive).

Let $G=(V,E)$ and $G'=(V',E')$ be two neighboring graphs such that they differ on node $v \in V$. 
Fix the random choices of subsets $A$ in Step~\ref{step:subset} and observe that after the execution of for loop, the count $c$ can differ by at most $1$ between the executions on $G$ and $G'$. Let $\tilde{T}$ be the smallest integer greater than $\hat{T}$. 
 Since algorithm $\AlgTesterPrime$ uses the Laplace mechanism
$e^{-\epsilon} \Pr[\tilde{T}<a] \leq \Pr[\tilde{T}<a-1] \leq e^{\epsilon} \Pr[\tilde{T}<a]$
 for every $a$. Thus,
\begin{align*}
\Pr[\AlgTesterPrime(G)=1] &=\sum_a \Pr[\tilde{T} = a] \Pr[c(G) > a]  \\
&\leq \sum_a \Pr[\tilde{T} = a] \Pr[c(G') > a -  1]  \\
& \leq e^{\epsilon} \sum_a \Pr[\tilde{T} = a - 1] \Pr[c(G') > a -  1] \\
& \leq e^{\epsilon} 
\Pr[\AlgTesterPrime(G')=1].
\end{align*}
Similarly, $\Pr[\AlgTesterPrime(G)=1] \geq e^{-\epsilon} \Pr[\AlgTesterPrime(G')=1]$. Hence, $\AlgTesterPrime$ is $(\epsilon,0)$-differentially oblivious.
 
We next prove that  
$\mbox{Tester}_\calT$ 
is $(\epsilon,\delta(1+e^{\epsilon}))$-differentially oblivious using \cref{lem:calAcalB},
that is we prove that for every graph $G$, the statistical distance between $\mbox{Tester}_\calT(G)$ and $\AlgTesterPrime(G)$ is at most $\delta$. 
Let $E$ be the event that $\hat{T} > 4T$ and observe that the probability $E$ occurs  is at most
$\delta$.\footnote{ $\Pr[\mathrm{Lap}(\frac{1}{\epsilon})\geq t/\epsilon] = \frac{1}{2} e^{-t}$ for every $t>0$. Thus,
$\Pr[E]=\Pr[\mathrm{Lap}(\frac{1}{\epsilon}) \geq \frac{\ln (1/2\delta)}{\epsilon}  ]=\delta$.}
We have that 
$\Big|\Pr[\mbox{Tester}_\calT(G) = 1]  -\Pr[\AlgTesterPrime(G) = 1]\Big| 
\leq \Big|\Pr[\mbox{Tester}_\calT(G) =1|E] 
-\Pr[\AlgTesterPrime(G) = 1|E]
\Big| \Pr[E] \leq \Pr[E] \leq \delta$.
Thus, by \cref{lem:calAcalB}, algorithm $\mbox{Tester}_\calT$ is $(\epsilon,\delta(1+e^\epsilon))$-differentially oblivious. 
\end{proof}

Observe that Algorithm $\mbox{Tester}_\calT$ never errs when $G\in\calP$ as in that case after the for loop is executed $c=4T$ and hence in Step~\ref{step:check-c-tester} $\mbox{Tester}_\calT$ outputs $1$. The next lemma analyses the error probability when $G$ is $\distParam'$-far from $\calP$.

\begin{lemma}
Algorithm $\mbox{Tester}_\calT$ is $(1-\delta -(2\delta)^{\frac{1}{3\epsilon}},\gamma')$-tester for the graph property $\calP$.
\end{lemma}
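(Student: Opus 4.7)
My plan is to focus on the case when $G$ is $\gamma'$-far from $\calP$, since the case $G \in \calP$ is immediate from the preceding observation: with $\calT$ having one-sided error, every iteration produces $\calT(G_i)=1$, so $c=4T$ after the loop, and the algorithm outputs $1$ with probability $1$.

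For the far case, I would first establish a structural claim: the choice $\gamma' = \gamma - 4T c_{\beta,\gamma}/n$ (which matches the formula in the theorem) ensures that every induced subgraph $G_i$ encountered in the loop remains at least $\gamma$-far from $\calP$ when regarded as a graph on its $n_i \geq n - 4T c_{\beta,\gamma}$ vertices. The intuition is that for any $G^*_i \in \calP$ close to $G_i$ on $n_i$ vertices, one can lift $G^*_i$ to a graph $G^*$ on $n$ vertices and track the induced change in the symmetric difference with $G$; this change is bounded by the number of removed vertices times $n$, so the relative distance shifts by at most $4T c_{\beta,\gamma}/n$. Ensuring that the lift actually lies in $\calP$ is the delicate point and may require either a closure assumption on $\calP$ (e.g., under adjoining isolated vertices) or a property-specific extension, depending on how generally one wants to state the result.

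Conditioned on this structural claim, in iteration $i$ the conditional probability $\Pr[\calT(G_i)=1 \mid \text{history}] \le 1-\beta$, so the total count $c$ is stochastically dominated by $B(4T,1-\beta)$ with expectation at most $3T$ (when $\beta\ge 1/4$). A Chernoff/Hoeffding bound then concentrates $c$ sharply around its mean.

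To conclude, I decompose the output-$1$ event via $\{c\ge \min(\hat T,4T)\}\subseteq \{c\ge\hat T\}\cup\{\hat T>4T\}$. The second event has probability $\Pr[\Lap(1/\epsilon)>T]=\tfrac12 e^{-\epsilon T}=\delta$ by the choice $T=\ln(1/(2\delta))/\epsilon$. The first event $\Pr[c\ge\hat T]=\Pr[c-3T\ge\Lap(1/\epsilon)]$ is bounded by splitting at some cutoff $x$: the contribution from $\{c\ge 3T+x\}$ is controlled by the concentration of $c$, while the contribution from $\{\Lap(1/\epsilon)\le -x\}=\tfrac12 e^{-\epsilon x}$ is the Laplace left-tail; balancing these two tails yields the exponent $1/(3\epsilon)$ in $(2\delta)^{1/(3\epsilon)}$. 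A union bound then gives the stated total error. I expect the structural lifting claim to be the main conceptual obstacle, while the probabilistic pieces are routine.
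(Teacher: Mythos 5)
Your overall plan mirrors the paper's (one-sided error handles the $G \in \calP$ case; the far case is a union bound over a Laplace tail and a Chernoff term for $c$), but there is a flaw in your mean bound that breaks the concentration step. You write that $c$ is stochastically dominated by $B(4T,1-\beta)$ with expectation at most $3T$ ``when $\beta \ge 1/4$.'' The arithmetic is correct, but $E[c]\le 3T$ is far too weak to be useful: since $\hat T = 3T + \Lap(1/\epsilon)$, you must control $\Pr[c \ge \hat T]$, which is roughly $\Pr[c\ge 3T]$, and if $E[c]$ can be as large as $3T$ this probability is about $1/2$. No choice of cutoff $x$ rescues the balance: $\Pr[c\ge 3T-x]$ has no exponential decay for small $x$, and the Laplace tail $\tfrac12 e^{-\epsilon x}$ does not become small until $x$ is of order $T$, at which point $3T-x$ is still at or above the mean. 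The paper's Chernoff step (which bounds $\Pr[Z\ge 2T]\le e^{-T/3}$ via $\mu\le T$ and $\eta=1$) requires $E[c]\le T$, i.e., $\Pr[\calT(G_i)=1]\le 1/4$, i.e., $\beta\ge 3/4$. The paper's stated hypothesis ``$\beta\le 1/4$'' is almost certainly a typo; by its own definition $\beta$ is the success probability, and a tester with confidence $1/4$ is vacuous. With $E[c]\le T$, the cutoff $x=T$ gives $\Pr[c\ge 2T]\le e^{-T/3}=(2\delta)^{1/(3\epsilon)}$ and $\Pr[\Lap(1/\epsilon)\le -T]=\delta$, which is exactly what the lemma claims.

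Two secondary observations. First, the paper's decomposition is slightly tighter than yours: it notes directly that $c<2T$ together with $\Lap(1/\epsilon)>-T$ forces $\min(\hat T,4T)>c$ and hence output $0$, so the failure event is contained in $\{c\ge 2T\}\cup\{\Lap\le -T\}$ in a single step; your route $\{c\ge\hat T\}\cup\{\hat T>4T\}$ followed by a second split inside $\{c\ge\hat T\}$ pays an extra additive $\delta$, giving $2\delta+(2\delta)^{1/(3\epsilon)}$ rather than the stated $\delta+(2\delta)^{1/(3\epsilon)}$. Second, your concern about the structural lifting claim (closure of $\calP$ under adjoining isolated vertices, and the change of denominator from $n^2$ to $n_i^2$ for the shrunken graph) points at a real informality: the paper simply counts that at most $4T n c_{\beta,\gamma}$ edges are removed in total and asserts the induced subgraph stays $\gamma$-far, without discussing either issue. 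So you are not missing an idea there; you are identifying a gap in the paper's own exposition.
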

\begin{proof}
Observe that on Step~\ref{step:induce} of the algorithm, we are eliminating at most $n \cdot c_{\beta,\gamma}$ edges. Thus, we are eliminating at most $4T n c_{\beta,\gamma}$ edges in total. Then, when $G$ is $\gamma'$-far from $\calP$, it is also $\gamma$-far from $\calP$ after the removal of the observed nodes in each step of the for loop.
We next prove that Algorithm $\mbox{Tester}_\calT$ fails with probability at most $2\delta^{\frac{1}{3\epsilon}}$. 
Observe that if Algorithm $\mbox{Tester}_\calT$ fails on $G$ then $c \geq 2T$ or $\mathrm{Lap}(\frac{1}{\epsilon})\leq -T$. 
We define $Z_i$ to be output of $\calT(G)$ in the $i$-th step of the for loop. Let $Z = \sum_i{Z_i}$. Observe that all $Z_i$ are independent and $\mathbb{E}[Z] \leq T$. Using the Chernoff Bounds\footnote{$\Pr[Z \geq (1+\eta)\mu]\leq e^{-\eta^2 \mu/(2+\eta)}$ for any $\eta>0$ where $\mu$ is the expectation of $Z$.}, we obtain that 
$
\Pr[Z \geq 2T] \leq e^{-T/3} = (2\delta)^{\frac{1}{3\epsilon}}
$.
We also know $\Pr[\mathrm{Lap}(\frac{1}{\epsilon})\leq -\frac{\ln (1/2\delta)}{\epsilon}]
=0.5 e^{-\ln (1 /2\delta)}=\delta$. Therefore, Algorithm $\mbox{Tester}_\calT$ fails with probability $\delta + (2\delta)^{\frac{1}{3\epsilon}}$.
\end{proof}

\subsection{Proof of the Correctness and Privacy of Algorithm $\AlgLocate$}
\label{app:section5}

The proof of \cref{T:locate} follows from the following claim and lemmas.

\begin{claim}
\label{cl:Fewj}
Let $\ell \geq \log n/\log\log n$.
The probability that there exists an element $j \in [n]$ such that algorithm $\mbox{Locate}_\calP$ samples the element $j$ in Step~\ref{step:sample} more than $2\ell$ times is less that $2^{-\ell}.$
\end{claim}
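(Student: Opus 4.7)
The plan is to fix a single index $j \in [n]$, bound the probability that $j$ is sampled more than $2\ell$ times by a direct binomial tail estimate, and then take a union bound over all $n$ indices. Concretely, for a fixed $j$, let $X_j$ be the number of iterations of the for loop (there are $n/2$ of them) in which Step~\ref{step:sample} picks $j$. Since each draw is uniform and independent, $X_j \sim \mathrm{Bin}(n/2, 1/n)$, with mean $1/2$.

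Next I would bound $\Pr[X_j > 2\ell]$ by a union bound over the choice of which $2\ell$ iterations are the ``hits'':
\[
\Pr[X_j > 2\ell] \;\leq\; \binom{n/2}{2\ell}\left(\tfrac{1}{n}\right)^{2\ell} \;\leq\; \left(\tfrac{en/2}{2\ell}\right)^{2\ell} n^{-2\ell} \;=\; \left(\tfrac{e}{4\ell}\right)^{2\ell}.
\]
It then remains to verify that for $\ell \geq \log n/\log\log n$, this quantity is at most $2^{-\ell}/n$; the claim then follows by a union bound over the $n$ choices of $j$.

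The main technical step is the inequality $\left(\tfrac{e}{4\ell}\right)^{2\ell}\leq 2^{-\ell}/n$, equivalently $2\ell\log(4\ell/e) \geq \ell + \log n$. For $\ell \geq \log n/\log\log n$ (so in particular $\ell$ is large and $\log\ell = \Theta(\log\log n)$), we have $2\ell\log(4\ell/e) = \Theta(\ell \log\ell) = \Theta(\log n)$, which dominates $\ell + \log n = O(\log n)$ up to constants; one can absorb the constants by requiring $\ell$ slightly larger than $\log n/\log \log n$, or observe that the claim as stated needs $\ell$ large enough that $\ell\log \ell \geq \log n + \ell$, which is essentially the same threshold. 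I do not expect any real obstacle beyond this routine calculation; it is just a matter of checking that $\ell\log\ell$ crosses $\log n$ at the stated threshold.
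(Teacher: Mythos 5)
Your proposal follows essentially the same route as the paper: fix $j$, bound $\Pr[X_j > 2\ell]$ by $\binom{n/2}{2\ell}n^{-2\ell}$ and then by $\bigl(e/(4\ell)\bigr)^{2\ell}$, and close with a union bound over the $n$ indices. The paper writes out the intermediate step $\bigl(e/(4\ell)\bigr)^{2\ell} < \ell^{-2\ell} < 2^{-2\log n + 2} < 2^{2-\ell}/n$ rather than stating the target inequality $2\ell\log(4\ell/e)\geq \ell+\log n$ and checking it asymptotically, but the content is identical. Your honest caveat about the constants is warranted and actually applies to the paper's own chain as well: after the union bound the paper reaches $2^{2-\ell}$, not $2^{-\ell}$, so the claim as stated already carries a small constant slack in the source. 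No substantive gap.
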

\begin{proof}
Fix an index $j$. The probability that the element $j$ is sampled more than $2\ell$ times is less than $\binom{n/2}{2\ell}\frac{1}{n^{2\ell}}
\leq \left(\frac{en}{4\ell}\right)^{2\ell} \frac{1}{n^{2\ell}} < \ell^{-2\ell}  
< 2^{-2\log n + 2}
< 2^{2-\ell}/n.$
The claim follows by the union bound.
\end{proof}

\begin{lemma}
Let $\delta < 1/n$. Algorithm $\mbox{Locate}_\calP$ is $(\epsilon, \delta(1+e^{\epsilon}))$-differentially oblivious.
\end{lemma}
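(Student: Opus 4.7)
The plan is to follow the template of the $\mathsf{Tester}_\calT$ privacy proof in \cref{app:section3}: introduce a variant $\AlgLocatePrime$ that replaces Step~\ref{step:check-c} by the unguarded test ``if $c>\hat T$ then output $1$'' (dropping the $\max(\hat T,0)$ safeguard). I will (i) show $\AlgLocatePrime$ is essentially $(\epsilon,\delta/2)$-differentially oblivious via a sparse-vector style coupling enabled by \cref{cl:Fewj}; (ii) bound the statistical distance between $\AlgLocatePrime(\vecX)$ and $\mbox{Locate}_\calP(\vecX)$ by $\delta/2$; and (iii) combine via \cref{lem:calAcalB} (applied in the straightforward form that allows an additive slack in the base algorithm's privacy parameter).

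Fix neighboring datasets $\vecX,\vecX'$ that disagree only on coordinate $j^*$. The samples $j_1,\ldots,j_{n/2}$ drawn in Step~\ref{step:sample} and the noise draws $\hat T_1,\ldots,\hat T_{\log n}$ are data-independent, so the entire access pattern of $\AlgLocatePrime$ depends on the dataset only through the checkpoint $2^{\ell^*}$ at which early termination occurs (if any). The main obstacle is that the counter sensitivity $|c_\ell(\vecX)-c_\ell(\vecX')|$ is itself random: it equals the number of times $j^*$ has been drawn among $j_1,\ldots,j_{2^\ell}$, which could in principle be as large as $n/2$. I will contain this by instantiating \cref{cl:Fewj} with $\ell=\log(2/\delta)$; since $\delta<1/n$ gives $\log(2/\delta)\geq \log n/\log\log n$, the claim yields that with probability at least $1-\delta/2$ no index is sampled more than $\Delta\triangleq 2\log(2/\delta)$ times. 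Call this good event $\bar B$.

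On $\bar B$ I will run the standard sparse-vector coupling. Assume without loss of generality $\calP(x_{j^*})=1$ and $\calP(x'_{j^*})=0$, so $0\leq c_\ell(\vecX)-c_\ell(\vecX')\leq \Delta$ uniformly in $\ell$. For each realization of the samples and thresholds $(\hat T_1,\ldots,\hat T_{\log n})$ that causes $\AlgLocatePrime(\vecX)$ to emit ``1'' at checkpoint $\ell^*$, the modified realization obtained by shifting only $\hat T_{\ell^*}$ down by $\Delta$ causes $\AlgLocatePrime(\vecX')$ to emit ``1'' at the same checkpoint: for $\ell<\ell^*$ the inequality $\hat T_\ell\geq c_\ell(\vecX)\geq c_\ell(\vecX')$ is preserved without any shift, and at $\ell^*$ we have $\hat T_{\ell^*}-\Delta<c_{\ell^*}(\vecX)-\Delta\leq c_{\ell^*}(\vecX')$. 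Because the Laplace density ratio under a one-coordinate shift of magnitude $\Delta$ is at most $e^{\epsilon'\Delta}$, and $\epsilon'\Delta=\epsilon$ by the choice $\epsilon'=\epsilon/(2\log(2/\delta))$, integrating the coupling over all terminating outcomes gives the $(\epsilon,0)$ bound conditional on $\bar B$. Removing the conditioning absorbs an additive $\Pr[B]\leq \delta/2$, yielding that $\AlgLocatePrime$ is $(\epsilon,\delta/2)$-differentially oblivious.

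The final step compares $\AlgLocatePrime$ with $\mbox{Locate}_\calP$: the two algorithms behave identically except on the event $N$ that some $\hat T_\ell\leq 0$ (the only step at which the $\max(\hat T,0)$ guard can matter). Since $\Pr[T+\Lap(1/\epsilon')\leq 0]=\tfrac12 e^{-\epsilon' T}=\delta/(2\log n)$ by the choice of $T$, a union bound over the $\log n$ checkpoints gives $\Pr[N]\leq \delta/2$, so the statistical distance between the two access-pattern distributions is at most $\delta/2$. Applying \cref{lem:calAcalB} (in the slight extension that propagates an additive $\delta_1=\delta/2$ slack in the base algorithm together with the $(1+e^\epsilon)$-weighted statistical-distance term) yields the claimed $(\epsilon,\delta(1+e^\epsilon))$ bound for $\mbox{Locate}_\calP$; the hardest single step is the sparse-vector coupling, where carrying the conditioning on $\bar B$ through the density comparison cleanly is the main subtlety.
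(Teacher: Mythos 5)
Your proof is correct and follows essentially the same template as the paper's: introduce a variant $\AlgLocatePrime$ that drops the $\max(\hat T,0)$ guard, run the sparse-vector coupling where the threshold at the stopping checkpoint is shifted by the maximum multiplicity $\Delta=2\log(2/\delta)$ of the distinguished coordinate (costing a density ratio of $e^{\epsilon'\Delta}=e^\epsilon$), and close via \cref{lem:calAcalB}. The one substantive difference is an accounting choice: the paper bakes the constraint ``no element sampled more than $2\log(2/\delta)$ times'' into the \emph{definition} of $\AlgLocatePrime$, so that $\AlgLocatePrime$ is $(\epsilon,0)$-DP and \emph{both} bad events (a negative threshold and an over-sampled coordinate) are charged to the single statistical distance $\delta$ fed to \cref{lem:calAcalB}; you instead keep the over-sampling bad event inside $\AlgLocatePrime$'s privacy analysis (making it $(\epsilon,\delta/2)$-DP conditionally) and charge only the negative-threshold event, $\delta/2$, to the statistical distance. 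Your route therefore needs the mild generalization of \cref{lem:calAcalB} to an approximate-DP base algorithm, which you correctly flag; the arithmetic $(1+e^\epsilon)\delta/2+\delta/2\le(1+e^\epsilon)\delta$ works out, so the final constant matches. Both decompositions are valid; the paper's is self-contained with respect to the stated \cref{lem:calAcalB}, while yours is slightly more modular at the cost of needing the unstated (but immediate) extension.
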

\begin{proof}
We first analyze a variant of $\mbox{Locate}_\calP$, denoted
by $\AlgLocatePrime$, in which Step~\ref{step:check-c} is replaced by ``If $c > \hat{T}$ then output $1$'' (that is,
the algorithm does not check if $\hat{T}>0$)
and no element is sampled more than $2\log (2/\delta)$ times.
We analyze the privacy of $\AlgLocatePrime(\vecX')$ similarly to the analysis of the sparse vector mechanism in~\cite{FOCS:HardtR10}.

Let $\vecX$ and $\vecX'$ be two neighboring datasets that such that $\calP(x_{j})=1$ and $\calP(x'_{j})=0$ for some $j$. 
Denote by $\tau=(\tilde{T}_1,\dots,\tilde{T}_{\log n})$ the values of the thresholds in an execution of  
 $\AlgLocatePrime$, where each threshold is rounded up to the smallest integer greater than $\hat{T}$.
 Furthermore, let $\ell_\tau\in [\log n]$ be the index such that
 $\AlgLocatePrime$ on input $\vecX$ outputs 1 when $i=2^{\ell_\tau}$
 (if no such $i$ exists, then  $\ell_\tau\in \ceil{\log n} +1$).
 Observe that 
 in  each
 execution of Step~\ref{step:check-c} the count $c$ on input $\vecX$
 is at least the count on input $\vecX'$ and can exceed it by at most $2\log(2/\delta)$ (since $j$ is sampled at most $2\log(2/\delta)$
 times).
 Thus,  $\AlgLocatePrime$ on input $\vecX'$ with thresholds 
 $\tau'=(\tilde{T}_1,\dots,\tilde{T}_{\ell_\tau-1},\tilde{T}_{\ell_\tau}-2\log(2/\delta),\tilde{T}_{\ell_\tau+1},\dots,\tilde{T}_{\log n})$
 outputs 1 when $i=2^{\ell_\tau}$.
 Since algorithm $\AlgLocatePrime$ uses the Laplace mechanism with $\epsilon'=\epsilon/(2\log(1/\delta))$,
 $$e^{-\epsilon} \Pr[\tilde{T}_{\ell_\tau}=a] \leq \Pr[\tilde{T}_{\ell_\tau}=a-2\log(2/\delta)] \leq e^{\epsilon} \Pr[\tilde{T}_{\ell_\tau}=a]$$
 for every $a$. Thus,
 
\begin{align*}
\Pr&[\access^{\AlgLocatePrime}(\vecX) \in S]  \\
& = \sum_{\tau=(\tilde{T}_1,\dots,\tilde{T}_{\log n})} \Pr[\access^{\AlgLocatePrime}(\vecX) \in S\,|\,\tilde{T}_1,\dots,\tilde{T}_{\log n}] \Pr[\tilde{T}_1,\dots,\tilde{T}_{\log n}]\nonumber\\
 & = \sum_{\tau=(\tilde{T}_1,\dots,\tilde{T}_{\log n})} \Pr[\access^{\AlgLocatePrime}(\vecX') \in S\,
 |\,\tilde{T}_1,\dots,\tilde{T}_{\ell_\tau-1},\tilde{T}_{\ell_\tau}-2\log(2/\delta),\tilde{T}_{\ell_\tau+1},\dots,\tilde{T}_{\log n}] 
 \\ & \hspace*{5cm}
 \cdot \Pr[\tilde{T}_1,\dots,\tilde{T}_{\log n}]\\
 &  \leq e^{\epsilon}
\sum_{\tau=(\tilde{T}_1,\dots,\tilde{T}_{\log n})} \Pr[\access^{\AlgLocatePrime}(\vecX') \in S\,
 |\,\tilde{T}_1,\dots,\tilde{T}_{\ell_\tau-1},\tilde{T}_{\ell_\tau}-2\log(2/\delta),\tilde{T}_{\ell_\tau+1},\dots,\tilde{T}_{\log n}] \\
 & \hspace*{5cm}
 \cdot \Pr[\tilde{T}_1,\dots,\tilde{T}_{\ell_\tau-1},\tilde{T}_{\ell_\tau}-2\log(2/\delta),\tilde{T}_{\ell_\tau+1},\dots,\tilde{T}_{\log n}]\\
 &= e^{\epsilon} \Pr[\access^{\AlgLocatePrime}(\vecX') \in S].
 \end{align*}
 Similarly, 
\begin{align*}
\Pr& [\access^{\AlgLocatePrime}(\vecX) \in S] \\
&  \geq e^{- \epsilon}
\sum_{\tau=(\tilde{T}_1,\dots,\tilde{T}_{\log n})} \Pr[\access^{\AlgLocatePrime}(\vecX') \in S\,
 |\,\tilde{T}_1,\dots,\tilde{T}_{\ell_\tau-1},\tilde{T}_{\ell_\tau}-2\log(2/\delta),\tilde{T}_{\ell_\tau+1},\dots,\tilde{T}_{\log n}] \\
 & \hspace*{5cm}
 \cdot \Pr[\tilde{T}_1,\dots,\tilde{T}_{\ell_\tau-1},\tilde{T}_{\ell_\tau}-2\log(2/\delta),\tilde{T}_{\ell_\tau+1},\dots,\tilde{T}_{\log n}]\\
 &= e^{- \epsilon} \Pr[\access^{\AlgLocatePrime}(\vecX') \in S].
\end{align*}

We next prove that  
$\mbox{Locate}_\calP$ 
is $(\epsilon,\delta(1+e^{\epsilon}))$-differentially oblivious using \cref{lem:calAcalB}. I.e, we prove that for every dataset $\vecX$, the statistical distance between $\access^{\mbox{Locate}_\calP}(\vecX)$ and $\access^{\AlgLocatePrime}(\vecX)$ is at most $\delta$. Notice that if all the thresholds are positive and all elements are sampled at most $2\log(2/\delta)$ times then $\mbox{Locate}_\calP(\vecX)$ and $\AlgLocatePrime(\vecX)$ have  the same access pattern.
By \cref{cl:Fewj}, the probability that there exists a $j$ that is sampled more than $2\log(2/\delta)$ is at $2^{-\log(2/\delta)}=\delta/2$.
We next observe that the probability that a threshold $\hat{T}=T+\mathrm{Lap}(\frac{1}{\epsilon'})$ is negative is at most 
$\delta/2$. Recall that $\Pr[\mathrm{Lap}(\frac{1}{\epsilon'})\leq -t/\epsilon'] = \frac{1}{2} e^{-t}$ for every $t>0$.
Thus,
$\Pr[\hat{T} \leq 0]=\Pr[\mathrm{Lap}(\frac{1}{\epsilon'}) \leq   -\frac{1}{\epsilon} \ln(\frac{\log n}{\delta}) ]=\frac{\delta}{2\log n}$.
Let $A$ be the event that at  least one of the $\log n$ thresholds $\hat{T}$ is at most 0 or some $j$ is sampled more that $2\log(2/\delta)$ times. By the union bound the probability of $A$  is at most $\delta$.
Therefore, for every set of access patterns $S$
\begin{align*}
|\Pr[&\access^{\mbox{Locate}_\calP}(\vecX) \in S]  -\Pr[\access^{\AlgLocatePrime}(\vecX) \in S]| \\
&  =
\Big|\Pr[\access^{\mbox{Locate}_\calP}(\vecX) \in S|A]\Pr[A]+ \Pr[\access^{\mbox{Locate}_\calP}(\vecX) \in S|\bar{A}]\Pr[\bar{A}]  \\
& \quad \quad -\Pr[\access^{\AlgLocatePrime}(\vecX) \in S|A]\Pr[A]
-\Pr[\access^{\AlgLocatePrime}(\vecX) \in S|\bar{A}]\Pr[\bar{A}]
\Big| \\
 & =
\Big|\Pr[\access^{\mbox{Locate}_\calP}(\vecX) \in S|A] 
-\Pr[\access^{\AlgLocatePrime}(\vecX) \in S|A]
\Big| \Pr[A] \\
& \leq \Pr[A] \leq \delta.
\end{align*}
Thus, by \cref{lem:calAcalB}, algorithm $\mbox{Locate}_\calP$ is $(\epsilon,\delta(1+e^\epsilon))$-differentially oblivious.
\end{proof}

We next analyze the running and probe complexity of our algorithm. 
Let $p$ be the probability that a uniformly chosen element in $\vecX$ satisfies $\calP$.
The non-private algorithm that samples elements until it finds an element satisfying $\calP$ has expected running time $1/p$ and the probability that it does not stop after $m$ steps is $(1-p)^m=((1-p)^{1/p})^{mp}\leq e^{-mp}$. We show that
$\mbox{locate}_{\calP}$ has a similar behavior.

\begin{lemma}
Let $p$ be the probability that a uniformly chosen element in $\vecX$ satisfies $\calP$. Then, for every integral power of two $m$ the probability that
algorithm $\mbox{locate}_{\calP}$ probes more than  
$m$  memory locations is less than $\delta/\log n +e^{-(m-2T)p+2T\ln m }$. In particular, for $m=\Omega(\frac{T}{p}\log(\frac{T}{p}))$, the probability is less than $\delta/\log n +e^{-O(mp)}$.
\end{lemma}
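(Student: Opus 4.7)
The plan is to bound the probability that $\mbox{Locate}_{\calP}$ fails to halt at the checkpoint $i = m$ (where $m$ is an integral power of two); since halting at any earlier checkpoint only probes fewer locations, this upper bounds the probability of probing more than $m$ memory locations. At that checkpoint the algorithm halts iff $c_m > \max\{\hat{T}_m, 0\}$, where $c_m = \sum_{i=1}^{m} \calP(x_{j_i})$ is the count and $\hat{T}_m = T + \Lap(1/\epsilon')$ is the noisy threshold drawn at that step. The strategy is to decompose the failure event into ``the threshold is too large'' and ``the count is too small,'' and union bound.

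Concretely, let $A = \{\hat{T}_m > 2T\}$ and $B = \{c_m \leq 2T\}$. If $\overline{A}\cap\overline{B}$ holds then $c_m > 2T \geq \hat{T}_m \geq \max\{\hat{T}_m,0\}$ and the algorithm halts at iteration $m$, so $\Pr[\text{not halted by } m] \leq \Pr[A]+\Pr[B]$. For $\Pr[A]$, the Laplace tail $\Pr[\Lap(1/\epsilon') \geq T] = \tfrac{1}{2} e^{-\epsilon' T}$ together with the parameter choice $T = \tfrac{1}{\epsilon'}\ln(\log n / \delta)$ gives $\Pr[A] = \delta/(2\log n) \leq \delta/\log n$, which accounts for the first summand in the lemma bound.

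For $\Pr[B]$, because Step~\ref{step:sample} samples $j \in [n]$ uniformly and independently in each iteration, the indicators $X_i = \calP(x_{j_i})$ are i.i.d.\ Bernoulli$(p)$ and $c_m \sim \mathrm{Bin}(m, p)$. I then estimate the binomial left tail directly:
\[
\Pr[c_m \leq 2T] = \sum_{k=0}^{2T}\binom{m}{k}p^k(1-p)^{m-k} \leq e^{-(m-2T)p}\sum_{k=0}^{2T}\binom{m}{k}p^k \leq e^{-(m-2T)p + 2T\ln m},
\]
using $(1-p)^{m-k}\leq e^{-(m-k)p}\leq e^{-(m-2T)p}$ for $k\leq 2T$ to pull the exponential out of the sum, and then $\binom{m}{k}p^k \leq m^k \leq m^{2T}$ (so the inner sum is at most $(2T+1)m^{2T}$, with the polynomial factor $(2T+1)$ absorbed into $2T\ln m$). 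Summing the two bounds yields the stated probability. For the ``in particular'' clause, substituting $m = \Omega(\tfrac{T}{p}\log(T/p))$ gives $(m-2T)p = \Omega(mp)$ while $2T\ln m = o(mp)$, so the second term becomes $e^{-\Omega(mp)}$.

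The main obstacle is matching the precise exponent $-(m-2T)p + 2T\ln m$: a naive multiplicative Chernoff bound on $\mathrm{Bin}(m,p)$ produces a quadratic-in-deviation shape rather than this ``log-factorial minus drift'' shape, so the careful route is the direct binomial-tail estimate above. Once the decomposition is set up and the uniform bound $(1-p)^{m-k} \leq e^{-(m-2T)p}$ is observed for $k \leq 2T$, both terms fall out cleanly.
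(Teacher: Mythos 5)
Your overall strategy is identical to the paper's: you decompose the failure event at checkpoint $m$ into $\{\hat{T}_m > 2T\}$ and $\{c_m \leq 2T\}$, bound the first by the Laplace tail giving $\delta/\log n$, and bound the second by a binomial left-tail estimate. The Laplace part matches the paper exactly (up to a harmless factor of $2$: the paper gets $\delta/(2\log n)$).

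Where you diverge is the binomial tail. The paper bounds $\Pr[c_m \le 2T]$ by a union bound over which $m-2T$ of the $m$ samples all miss, yielding the \emph{single} term $\binom{m}{2T}(1-p)^{m-2T} \le m^{2T}e^{-(m-2T)p}=e^{-(m-2T)p+2T\ln m}$ with no sum to control. You instead write out the full binomial sum $\sum_{k=0}^{2T}\binom{m}{k}p^k(1-p)^{m-k}$, pull out $e^{-(m-2T)p}$, and bound each of the $2T+1$ remaining terms by $m^{2T}$. This gives $(2T+1)m^{2T}e^{-(m-2T)p}$. Your claim that the $(2T+1)$ factor is ``absorbed into $2T\ln m$'' does not hold as an inequality --- $(2T+1)m^{2T} = e^{\ln(2T+1)+2T\ln m} > e^{2T\ln m}$ --- so as written your chain of inequalities falls slightly short of the stated bound $e^{-(m-2T)p+2T\ln m}$. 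The extra polynomial factor is immaterial for the ``in particular'' clause ($e^{-\Omega(mp)}$), and the fix is easy: either switch to the paper's union-bound over misses (which eliminates the sum entirely), or carry the $\ln(2T+1)$ term and note it is dominated. As it stands, your version proves a marginally weaker precise bound but the same asymptotics.
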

\begin{proof}
Let $t=2^i$. The probability that $\hat{T} \geq 2T$ is 
$\Pr[\mathrm{Lap}(\frac{1}{\epsilon'})\geq \frac{1}{\epsilon'} \ln \frac{\log n}{\delta}]=0.5 e^{-\ln(\log n /\delta)}=\frac{\delta}{\log n}.$
Assuming that $\hat{T} \geq 2T$, the probability that the algorithm does not halt after $m=2^i$ steps is
less than 
$$\binom{m}{2T}\left(1-p\right)^{m-2t}\leq m^{2T}e^{-(m-2T)p}\leq e^{-(m-2T)p+2T\ln m}.$$
\end{proof}
\remove{
\begin{proof}
Let $m=aT/p$ for an integer $a >4$ and define $m$ binary random variables $X_1,\dots,X_m$, where $X_i=\calP(x_{\pi(i)})$ for a permutation chosen with uniform
distribution from a family of pairwise independent permutations. Furthermore, let $X=\sum_{i=1}^m X_m$, that is,
$X$ is a random variable counting the number of  elements satisfying $\calP$ among the first $m$ sampled elements (i.e., the counter $c$ in $\mbox{locate}_{\calP}$).

We claim that $X_i$ and $X_j$ are negatively correlated for every $i\neq j$.
Indeed, $E[X_i]=E[\calP(x_{\pi(i)})]=p$ (since $\pi(i)$ is uniformly distributed in $[n]$) and similarly, $E[X_j]=p$.
However, $E[X_i X_j] = \Pr[X_i=1 \wedge X_j=1]=\Pr[X_i=1 ]\Pr[X_j=1|X_j=1] \leq p^2$ (since  
$\Pr[X_j=1|X_j=1]=(np-1)/(n-1) < p$).

It holds that $E[X]=mp=aT$ and $\var[X_i]=p(1-p)\leq p$.
Thus, by  Chebychev's inequality
$$\Pr\left[\,X \leq 2T \,\right]\leq \Pr\left[\,\Big|X-E[X]\Big| \geq (a-2)T \,\right] \leq \frac{\sum_{i=1}^m \var[X_i]}{t^2}.$$
\end{proof}
}

\subsection{Proof of the Correctness and Privacy of Algorithm $\AlgSearch$}
\label{app:section6}
\cref{T:search} is proved in the next 3 claims.
We start by analyzing the running time of the algorithm.
\begin{claim}
\label{c:logn} 
Let $\beta < 1/n$ and $\epsilon < \log^2 n$.
The while loop in Algorithm $\AlgSearch$ is executed at most
$2.5 \log n$ time. Furthermore, the total running time of the algorithm is $O(\frac{1}{\epsilon}\log^2{n}\log{\frac{1}{\beta}})$.
\end{claim}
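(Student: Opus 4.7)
The plan is to show the interval length $\max-\min$ contracts by a constant factor in each iteration of the while loop, bounding the number of iterations by $O(\log n)$, and then multiply by the $O(k)$ per-iteration cost.

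The key observation is that the Laplacian noise merely translates the new interval but does not affect its length: subtracting the two updates to $\min$ and $\max$ causes $I$ and $\mathrm{noise}$ to cancel, giving a new length of at most $\bigl(\tfrac{2\log(1/\delta')}{\epsilon'}+1\bigr)c+1$, and clipping to $[0,n]$ can only decrease this further. Substituting $c \le (\max-\min)/k$ and $k\ge 4\log(1/\delta')/\epsilon'$, the new length is at most $\bigl(\tfrac12+\tfrac{\epsilon'}{4\log(1/\delta')}\bigr)(\max-\min)+1$. Using the hypotheses $\epsilon<\log^2 n$ and $\beta<1/n$ (we take $\delta=\beta$, whence $\delta'<1/n$ and $\log(1/\delta')>\log n$), the term $\epsilon'/(4\log(1/\delta'))$ is at most $1/10$, yielding the recurrence $a_{t+1}\le 0.6\,a_t+1$ for the interval length $a_t$ after iteration $t$, with $a_0=n$.

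Solving this linear recurrence gives $a_t \le 0.6^t\,n + 2.5$. Under the same hypotheses, $k\ge 4\log n /(\log n /2.5) = 10$, so the while loop exits once $a_t \le 10$. Taking $t=2.5\log_2 n$ yields $0.6^t n \le n^{1-2.5\log_2(5/3)} \le n^{-0.84} \ll 1$, so $a_t < 3 \le k$, meaning the loop runs at most $2.5\log n$ times.

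Each iteration costs $O(k)$ (the scan of $\vecY$ dominates), and the post-loop scan of $\vecX$ between $\min$ and $\max$ costs $O(k)$ as well. Substituting $k = \lceil 4\log(1/\delta')/\epsilon'\rceil = O\bigl((\log n / \epsilon)\log(\log n / \beta)\bigr) = O\bigl(\log n \log(1/\beta)/\epsilon\bigr)$, where the last equality uses $\beta<1/n$ to absorb $\log\log n$ into $\log(1/\beta)$, we obtain total running time $O(\log n\cdot k) = O(\log^2 n \log(1/\beta)/\epsilon)$. The main obstacle is the contraction-factor argument: once one sees that the noise cancels in the window length and that floor rounding together with clipping to $[0,n]$ do not spoil the constant-factor contraction, the rest is routine recurrence solving.
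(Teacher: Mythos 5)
Your proof is correct and takes essentially the same approach as the paper's: show the interval length contracts by a constant factor per iteration (the Laplace noise cancels in the length, leaving a $(2\log(1/\delta')/\epsilon'+1)c + O(1)$ window that is bounded via $k \ge 4\log(1/\delta')/\epsilon'$), bound the iteration count by $2.5\log n$, and multiply by the $O(k)$ per-iteration cost. The only stylistic difference is that you solve the affine recurrence $a_{t+1}\le 0.6\,a_t+1$ explicitly, whereas the paper absorbs the additive $+1$ term directly into a strict $3/4$-contraction using the hypotheses on $\beta$ and $\epsilon$; both are valid.
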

\begin{proof}
Let $\min_0,\max_0$ and $\min_1,\max_1$ be the values of $\min,\max$ before and after an execution of a step of the while loop
in Algorithm $\AlgSearch$. 
Note that 
$$\textstyle{\max_1-\min_1} \leq 1+(2 \cdot  \frac{\log 1/\beta'}{\epsilon'}+1) \cdot \frac{\max_0-\min_0}{\frac{4 \log (1/\beta')}{\epsilon'}}
\leq 3 \cdot  \frac{\log 1/\beta'}{\epsilon'} \cdot \frac{\max_0-\min_0}{\frac{4 \log (1/\beta')}{\epsilon'}}= \frac{3(\max_0-\min_0)}{4}.$$ 
Therefore, algorithm $\AlgSearch$ eliminates more than a quarter of the elements in each step of the while loop and the algorithm will
halt after less than $2.5 \log n$ steps. 

Moreover, observe that Algorithm $\AlgSearch$ makes $k$ memory accesses in each step of the while loop and additional $k$ memory accesses after the loop. Thus, its running time is $O(\frac{1}{\epsilon}\log^2{n}(\log{\log{n}}+\log{\frac{1}{\beta}})) =
O(\frac{1}{\epsilon}\log^2{n}\log{\frac{1}{\beta}})$ (since $\beta < 1/n$).
\end{proof}

\begin{claim}
Algorithm $\AlgSearch$ returns the correct index  with probability at least $1-\beta$.
\end{claim}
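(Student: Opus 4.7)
The plan is to identify an invariant and show that it is preserved throughout the while loop with high probability. Let $I^*$ be the true answer, i.e., the largest index in $[n]$ with $x_{I^*}\le a$, or $I^*=0$ if no such index exists. I will argue that the interval $[\min,\max]$ maintained by the algorithm always satisfies $I^*\in[\min,\max]$, except with total probability at most $\beta$ over all iterations. Since at termination the final scan examines every index in this interval, it must then output $I^*$ exactly, giving correctness.

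I would proceed inductively on the iterations of the while loop. Initially $(\min,\max)=(0,n)$ and the invariant is trivial. For the inductive step, fix some iteration starting at $(\min_0,\max_0)$ containing $I^*$, and let $c=\floor{(\max_0-\min_0)/k}$. Since $\vecX$ is sorted, there is a unique $i^*\in\{0,1,\dots,k\}$ such that $\min_0+i^*\cdot c\le I^*<\min_0+(i^*+1)\cdot c$ (with the $i^*=k$ case capped by $\max_0$, and $i^*=0$ corresponding to the situation where no $y_i\le a$). A short monotonicity argument using $y_i=x_{\min_0+i\cdot c}$ shows that the value $I$ computed in Step~\ref{step:I} equals this $i^*$. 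Let $E$ be the event $|\mathrm{noise}|\le \log(1/\delta')/\epsilon'$; since $\mathrm{noise}\sim\Lap(1/\epsilon')$, we have $\Pr[\bar{E}]\le e^{-\log(1/\delta')}=\delta'$. Conditioned on $E$,
\begin{align*}
\min_1 &\le \min_0 + \bigl\lfloor(I+\mathrm{noise}-\tfrac{\log 1/\delta'}{\epsilon'})\cdot c\bigr\rfloor \le \min_0+I\cdot c\le I^*,\\
\max_1 &\ge \min_0 + \bigl\lfloor(I+\mathrm{noise}+\tfrac{\log 1/\delta'}{\epsilon'}+1)\cdot c\bigr\rfloor \ge \min_0+(I+1)c \ge I^*+1,
\end{align*}
where the last inequality holds for $i^*<k$ by definition of $i^*$, and for $i^*=k$ by $\max_0\le\min_0+(k+1)c$ (using $c\ge (\max_0-\min_0)/k - 1$ together with the while-loop precondition $\max_0-\min_0>k$, which gives $(k+1)c\ge\max_0-\min_0$ in the relevant regime). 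The clamps $\max\{0,\cdot\}$ and $\min\{n,\cdot\}$ only help, so $I^*\in[\min_1,\max_1]$.

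To conclude, Claim~\ref{c:logn} bounds the number of iterations by $2.5\log n$; a union bound then gives failure probability at most $2.5\log n\cdot \delta'=\delta$, and instantiating $\delta$ as the correctness parameter $\beta$ yields the claim. At exit the loop invariant guarantees $I^*\in[\min,\max]$ and $\max-\min\le k$, so the terminal linear scan identifies $I^*$ with certainty.

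The main technical subtlety will be the edge case $i^*=k$, where $I^*$ sits in the residual tail $(\min_0+kc,\max_0]$ that is not covered by any sample point. Because $c=\floor{(\max_0-\min_0)/k}$ uses a floor, this tail can have width close to $k$, and one needs to verify that the $+1$ appearing inside the definition of $\max_1$, together with the assumption $\max_0-\min_0>k$ at the start of the iteration, suffices to guarantee $\max_1\ge I^*$. The only other bookkeeping point is correctly treating $I^*=0$ (or $I^*<\min_0+c$), where $I=0$ and the $\max\{0,\cdot\}$ clamp keeps $\min_1=0\le I^*$.
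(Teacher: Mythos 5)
Your proof is essentially the same as the paper's: both maintain the invariant $I^*\in[\min,\max]$ by induction over the loop, condition on every Laplace noise staying below $\log(1/\delta')/\epsilon'$ in magnitude, and union-bound over the $2.5\log n$ iterations (the paper's proof uses $\beta'$ where the algorithm writes $\delta'$, a notational slip you handled correctly by identifying $\beta$ with $\delta$). You are in fact more careful than the paper about the $i^*=k$ boundary, which the paper glosses over by writing $\bar I\le\min_0+(I+1)c$ without checking that this holds when $I=k$; note however that your claimed bound $(k+1)c\ge\max_0-\min_0$ does not actually follow from $\max_0-\min_0>k$ alone (e.g.\ $\max_0-\min_0=k+2$ gives $c=1$ and $(k+1)c=k+1<k+2$), so this corner case would need either a larger slack in the definition of $\max_1$ or an argument that $I^*\le\min_0+kc$ holds anyway, a wrinkle present in the paper's own proof as well.
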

\begin{proof}
Let $\bar{I}$ be the maximal index such that $x_{\bar{I}} \leq a$ (i.e., $\bar{I}$ is the index that algorithm $\AlgSearch$ should return).
We prove by induction that if all Laplace noises in the algorithm satisfy $|\Lap(\frac{1}{\epsilon'})|< \frac{\log{1/\beta'}}{\epsilon'}$ then in each step of the algorithm $\min \leq \bar{I} \leq \max$,
hence the algorithm will return $\bar{I}$ in its last scan of $\vecX$ between $\min$ and $\max$. 

The basis of the induction is trivial since $0 \leq \bar{I} \leq n$.
For the induction step, let $\min_0,\max_0$ and $\min_1,\max_1$ be the values of $\min,\max$ before and after an execution of a step of the while loop
in Algorithm $\AlgSearch$. By the induction hypothesis, $\min_0\leq \bar{I} \leq \max_0$.
The algorithm finds an index $I$ such that $\min_0+Ic \leq \bar{I} \leq \min_0+(I+1)c$.
By our assumption on the Laplace noise, $\min_1 \leq \min_0+Ic$, thus,  $\min_1 \leq \bar{I}$. 
Similarly,  $\max_1 \geq \min_0+(I+1)c$, thus,  $\max_1 \geq \bar{I}$.

Recall that $\Pr[|\Lap(\frac{1}{\epsilon'})| \geq t/\epsilon'] = e^{-t}$ for every $t>0$.
Thus, by \cref{c:logn} and the union bound, the probability that one of the Laplace noises is greater than $\frac{\log{1/\beta'}}{\epsilon'}$  is at most $(2.5 \log n) \cdot  \beta' =\beta$. Hence, the probability that algorithm $\AlgSearch$ returns the correct index $\bar{I}$ is at least $1-\beta$. 
\end{proof}

Next, we show that algorithm $\AlgSearch$ is $(\epsilon,0)$-differentially oblivious.
\begin{claim}
Algorithm $\AlgSearch$ is an $(\epsilon,0)$ deferentially oblivious algorithm.
\end{claim}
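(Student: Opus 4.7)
The plan is to view the algorithm as releasing a sequence of noisy indices $\hat{I}_1,\hat{I}_2,\ldots$ (one per iteration of the while loop), each produced by the Laplace mechanism, and then to argue that the full access pattern is a deterministic post-processing of this sequence together with public parameters.

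First I would observe that in any single iteration, the memory addresses read are determined entirely by the current $\min,\max,c$ (namely the indices $\min+i\cdot c$ for $i\in[k]$), and the only data-dependent update carried to the next iteration is the pair $(\min,\max)$, which is itself a deterministic function of the noisy output $\hat{I}=I+\Lap(1/\epsilon')$. The same holds for the final linear scan, whose address range is determined by the last $(\min,\max)$. Hence the entire access pattern is a deterministic post-processing of the sequence of $\hat{I}$'s together with the public parameters $n,\epsilon,\delta,k$.

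Next I would bound the sensitivity of $I$ to $1$ between neighboring sorted datasets $\vecX$ and $\vecX'$. Let $J$ denote the maximum index of $\vecX$ with $x_J\leq a$ (and $J=0$ if no such index exists); define $J'$ analogously for $\vecX'$. Since $\vecX$ and $\vecX'$ agree as multisets except on one record, the number of entries $\leq a$ differs by at most one, i.e., $|J-J'|\leq 1$. In any given iteration the value $I$ equals the number of sampled positions $\min+i\cdot c$, $i\in[k]$, whose value is $\leq a$, which is $\min\bigl(k,\max(0,\lfloor (J-\min)/c\rfloor)\bigr)$, and likewise for $\vecX'$ with $J'$. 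Since $c\geq 1$, the floor operation preserves the bound, giving $|I_{\vecX}-I_{\vecX'}|\leq 1$.

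Therefore each iteration releases $\hat{I}$ via a Laplace mechanism with scale $1/\epsilon'$ applied to a function of sensitivity $1$, which is $(\epsilon',0)$-differentially private. By \cref{c:logn} the while loop runs at most $2.5\log n$ times, so basic (adaptive) composition for pure differential privacy gives that the joint distribution of all $\hat{I}$'s is $(2.5\log n\cdot\epsilon',\,0)=(\epsilon,0)$-differentially private; post-processing then yields the claim for the access pattern. The main obstacle, modest but worth stating carefully, is the sensitivity argument: even though many individual sampled entries $x_{\min+ic}$ can change simultaneously when the differing record induces a long shift in the sorted order, what determines $I$ is only the count of such entries falling below $a$, and this count changes by at most one.
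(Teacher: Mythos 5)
Your proposal is correct and follows essentially the same approach as the paper: bound the sensitivity of the per-round index $I$ to $1$ between neighboring sorted datasets, observe that the Laplace mechanism makes each round $(\epsilon',0)$-differentially oblivious (with the access pattern being a post-processing of the noisy index), and apply basic composition over the at-most-$2.5\log n$ rounds guaranteed by \cref{c:logn}. Your counting argument for the sensitivity (that the number of entries $\leq a$ differs by at most one, hence $|J-J'|\leq 1$, hence $|I-I'|\leq 1$ after the floor) is a slightly cleaner and more explicit version of the paper's interleaving observation $x_{i-1}\leq x'_i\leq x_i$.
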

\begin{proof}
We show below that each step of the while loop in algorithm $\AlgSearch$ is  $(\epsilon',0)$-differentially oblivious. 
Applying the basic composition theorem and \cref{c:logn}, we obtain that the $\AlgSearch$ algorithm is $(\epsilon=(2.5 \log n) \epsilon' ,0)$-differentially oblivious.

Fix a step of the loop and view it as an algorithm that returns $\min$ and $\max$ (given these values the access pattern of the next step is fixed).
Let $\vecX$ and $\vecX'$ be two neighboring datasets such that for some $j$ we have $x_j > x'_j$ and $x_i = x'_i$ for all $i<j$. It holds that $x_{i-1} \leq x'_i \leq x_i$ for every $i$.
Let $I(\vecX)$ and $I(\vecX')$ be the values computed in step 
\ref{step:I} of the algorithm on inputs $\vecX$ and $\vecX'$
respectively.
Thus, the value $I(\vecX)$ is at least the value $I(\vecX')$ and can exceed it by one. Intuitively, since algorithm $\AlgSearch$ uses the Laplace mechanism, the probabilities  
of returning a value $\min$ on $\vecX$ and $\vecX'$ are at most $e^{\pm \epsilon'}$ apart. Formally, 
if $\Lap(1/\epsilon')+I(\vecX)=\Lap(1/\epsilon')+I(\vecX)$
(where we consider two independent noises),
then
the algorithm returns the same value of $\min$ on both inputs. The lemma follows since for every set $A$:
$$e^{-\epsilon'} \leq e^{-|I(\vecX)-I(\vecX')|\epsilon'} \leq \frac{\Pr[\Lap(1/\epsilon')+I(\vecX)\in A]}{\Pr[\Lap(1/\epsilon')+I(\vecX')\in A]} \leq e^{|I(\vecX)-I(\vecX')|\epsilon'}\leq e^{\epsilon'}.$$
\end{proof}
 
\end{document}